\documentclass[preprint,12pt]{elsarticle}

\usepackage{amssymb}
\usepackage{amsmath}
\usepackage{amsthm}
\usepackage{color}

\newtheorem{theorem}{Theorem}
\newtheorem{proposition}{Proposition}
\newtheorem{corollary}{Corollary}

\begin{document}

\begin{frontmatter}

\title{Spherical Geometrical Bases of Spherical Origami}

\author{Takashi Yoshino}

\affiliation{organization={Department of Mechanical Engineering, Toyo University},
            addressline={Kujirai 2100},
            city={Kawagoe},
            postcode={350-8585},
            state={Saitama},
            country={Japan}}

\begin{abstract}
This paper establishes a rigorous geometrical framework for spherical origami---origami using spherical sheets based on spherical geometry---in two settings: origami restricted to the unit sphere $\mathbb{S}^2$, and three-dimensional folding of spherical sheets in space.
Three main results are proved.
First, all seven Huzita--Justin axioms admit explicit constructions on $\mathbb{S}^2$, so spherical origami inherits the same one-fold axiom structure as Euclidean origami, with the number of solutions for each axiom mirroring the Euclidean case.
Second, the Kawasaki--Justin theorem on flat-foldability extends to $\mathbb{S}^2$ by a local-tangent-plane argument, and the resulting vertex condition is equivalent to the angle-sum condition that characterizes Robertson's spherical $f$-tilings.
Third, taking equidistant curves (small arcs) as fold curves yields a one-parameter family of isometric 3D folds, parametrized by the dihedral angle whose closed-form expression is derived, and recovers geodesic folds as the flat limit.
The framework is illustrated by constructing computer graphics of spherical origami birds as a concrete example, supporting the practical utility of the proposed approach.
\end{abstract}

\begin{highlights}
\item All seven Huzita--Justin axioms are shown to admit explicit constructions on $\mathbb{S}^2$.
\item The Kawasaki--Justin theorem for flat-foldability is established on $\mathbb{S}^2$ and identified with Robertson's $f$-tiling vertex condition.
\item Equidistant curves are introduced as fold curves, giving a one-parameter family of isometric 3D folds with a closed-form dihedral angle.
\item Spherical origami birds are constructed in 3D space, illustrating the framework.
\end{highlights}

\begin{keyword}
spherical origami\sep spherical geometry\sep origami bird
\end{keyword}

\end{frontmatter}


\section{Introduction}\label{sec1}

Theories and applications of origami have progressed recently due to the increase in academic and technological interest.
One reason origami attracts mathematicians is that its manipulation (paper folding) differs from Euclidean geometry (using a compass and straightedge).
For example, origami provides an operation to divide an arbitrary angle equally into three, which is impossible in Euclidean geometry~\cite{hull2012project,geometricOrigami}.
On the other hand, the origami concept is now applied to various designs~\cite{origamiEngineering,liu2021origami}. Miura-ori~\cite{miuraoriReview} is a representative origami technology designed for space antennae, demonstrating that origami design is effective for such expanding structures.

A natural extension to consider is origami having curved faces.
There are two ways to realize curved faces: using flat paper and curved fold lines \cite{mitani2019curved,lukasheva2021curved}, and using curved paper directly.
Although the former method has been actively developed, the latter has seldom been explored because curved sheets are not sufficiently produced or made available.
The present article considers ``spherical origami"~\cite{kawasakiSphericalOrigami}, origami using spherical sheets, as the simplest case of the latter method.
Although spherical sheets are not yet widely available as a physical material, the mathematical framework developed here has potential relevance in several contexts.
In the design of deployable curved shell structures, such as retractable domes and curved antenna panels, fold-line geometry on curved surfaces must be specified precisely; the axioms and fold equations derived here provide the necessary computational tools for such tasks.
In computer graphics, accurate simulation of non-Euclidean surface deformations---for example, the rendering of spherical or dome-shaped flexible panels---requires explicit fold equations of the kind developed in this paper.
Finally, the present work contributes to the theoretical development of non-Euclidean origami: the framework runs in parallel with hyperbolic origami~\cite{hyperbolicOrigami}, and comparing the two geometries may yield general principles applicable to origami on surfaces of arbitrary curvature.

The objective of this study is to establish a geometrical basis for spherical origami on $\mathbb{S}^2$ and in three-dimensional (3D) space, grounded in spherical (non-Euclidean) geometry.
Specifically, the origami axioms are extended to the spherical setting and their validity is demonstrated through a practical case of producing computer graphics of spherical origami birds.
The definitions and axioms have been considered for constructing a mathematical formalism of Euclidean origami. Alperin and Lang~\cite{alperinAxioms} summarized the history of Euclidean origami and the contributions by Justin~\cite{justin}, Huzita~\cite{huzitaAxioms}, Hatori~\cite{Hatori}, and Lang~\cite{LangHP} toward the construction of the axioms.
They demonstrated that seven axioms of one fold are possible for Euclidean origami, based on eight definitions of Euclidean geometry.
Modifying the definitions and axioms in a similar way to apply them to spherical origami seems fruitful for comparing Euclidean and spherical origami.

Some researchers have discussed origami theory for curved paper using non-Euclidean geometry (e.g., Kawasaki~\cite{kawasakiSphericalOrigami,kawasakiBook} for spherical origami and Alperin et al.~\cite{hyperbolicOrigami} for hyperbolic origami).
However, the basic treatments of non-Euclidean origami have not been discussed enough for practical approaches for applications such as making computer graphics: specifically, none of the articles to date have clarified the equations or algorithms for drawing fold lines.
Kawasaki~\cite{kawasakiBook} stated that using a hemispherical paper sheet makes the construction of a spherical origami crane possible, although no definitions of spherical origami are included in the book.
Later, Kawasaki~\cite{kawasakiSphericalOrigami} listed the eight operations of spherical origami and proposed that they be summarized as four fundamental operations.
Because it is enough for origami practitioners to see only the operations, no equations for fold lines were demonstrated in that book.
Thus, systematic definitions of spherical origami and practical results based on those definitions are necessary to better understand its features.

A related line of research concerns isometric foldings of Riemannian manifolds in the sense of Robertson~\cite{robertson1977}, who showed that the singularity set of such folds on surfaces forms an edge-to-edge tiling ($f$-tiling) of even vertex valency in which the alternating angle sums at each vertex equal $\pi$.
Subsequent work by Breda and collaborators classified and studied deformations of spherical $f$-tilings~\cite{breda2010}.
While this body of work also addresses folding on $\mathbb{S}^2$, its focus is the topological and combinatorial classification of tiling patterns arising from isometric folds, rather than the computational tools for constructing fold curves and fold operations required for origami design and computer graphics---the objective of the present work.

The structure of the paper and its main results are as follows.
First, the definitions for Euclidean origami are modified to match the constraints of spherical geometry, and Theorem~\ref{thm:SphericalHJ} establishes that all seven Huzita--Justin axioms admit explicit constructions on $\mathbb{S}^2$, with solution counts mirroring the Euclidean case.
Second, Theorem~\ref{thm:SphericalKJ} establishes the spherical analogue of the Kawasaki--Justin theorem for flat-foldability on $\mathbb{S}^2$, and Corollary~\ref{cor:Ftiling} identifies its vertex condition with that of Robertson's spherical $f$-tilings.
Third, folding operations in 3D space are introduced using equidistant curves as fold curves, leading to Theorem~\ref{thm:3DFold}, which characterizes the resulting one-parameter family of isometric 3D folds and the equivalence of two transformation formulas, and Proposition~\ref{prop:DihedralAngle}, which gives a closed-form expression for the dihedral angle.
Finally, as a practical case, spherical origami birds are constructed in computer graphics as a concrete example of the framework.

\section{Origami on $\mathbb{S}^2$}

\subsection{Preliminary Settings}

\begin{figure}[ht]
\centering
\includegraphics[width=0.5\textwidth]{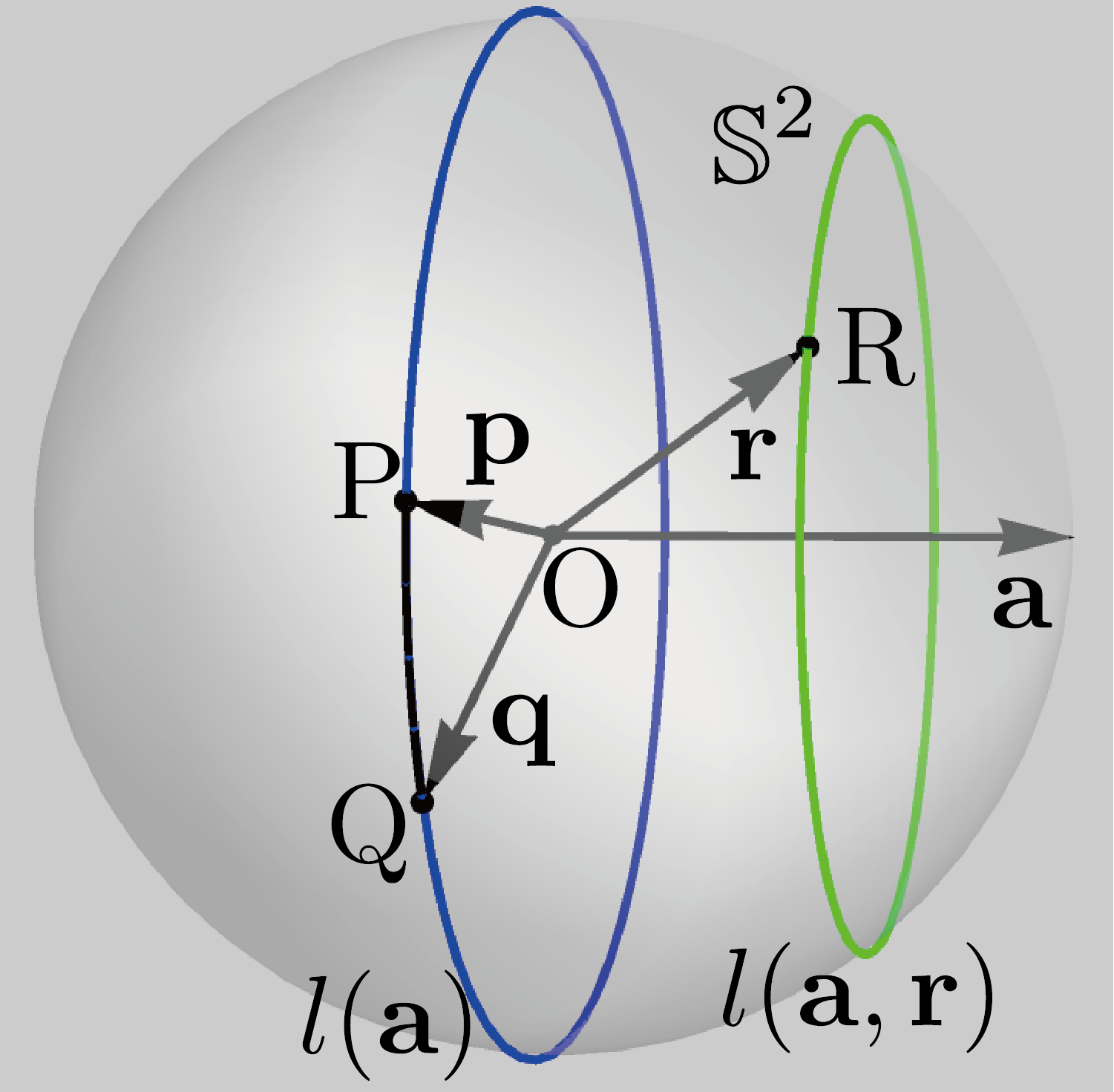}
\caption{Examples of points, a great circle and its pole, and a small circle (equidistant curve). }\label{Fig:Notation}
\end{figure}

Based on the general treatments of spherical geometry~\cite{toth}, the following definitions and representations are used in this paper.
We consider the surface of a unit sphere $\mathbb{S}^2$.
The center of the unit sphere is set at the origin of the Cartesian coordinates in the following sections except where specifically stated otherwise. Figure~\ref{Fig:Notation} shows examples of the notation. Points are represented by capital Latin letters (e.g., $\mathrm{P}$ and $\mathrm{Q}$), and the associated position vectors are denoted by the corresponding lowercase bold letters (e.g., $\mathbf{p}$ for $\mathrm{P}$).
Geodesic lines (great circles), which correspond to straight lines in Euclidean geometry, are denoted using the small italic letter $l$ and its pole ($\mathbf{a}$) as $l(\mathbf{a})$, or using simply the pole itself. Thus, any point $\mathbf{p}$ on the geodesic line represented by $\mathbf{a}$ satisfies $\mathbf{a}\cdot\mathbf{p}=0$. Poles are always set as unit vectors, although arbitrary vectors perpendicular to the plane containing the great circle are acceptable as its pole.
Equidistant curves (small circles), characteristic of non-Euclidean geometry, are denoted using the letter $l$, its pole $\mathbf{a}$, and any point on the circle $\mathbf{r}$ as $l(\mathbf{a}, \mathbf{r})$, or using simply $(\mathbf{a}, \mathbf{r})$. Such a curve comprises points equidistant from the geodesic line $\mathbf{a}$; however, it is not itself a geodesic, as is well known in non-Euclidean geometry.

The so-called Rodrigues' rotation formula~\cite{ogawaSphericalGeometry} is useful for drawing spherical patterns. It represents the rotation of a point $\mathrm{P}$ to $\mathrm{P'}$ by angle $\theta$ with respect to axis $\mathbf{a}$ as
\[
\mathbf{p}' = \mathbf{p} \cos\theta + (\mathbf{a}\times\mathbf{p})\sin\theta+\mathbf{a}(\mathbf{a}\cdot\mathbf{p}) (1-\cos\theta),
\]
where the axis $\mathbf{a}$ is set to be a unit vector. The poles of great and small circles are defined as unit vectors because the above formula is used to draw them.
For example, the formula for the spherical midpoint $\mathrm{P}_\mathrm{m}$ of $\mathrm{P}_1$ and $\mathrm{P}_2$ (when $\mathrm{P}_1$ and $\mathrm{P}_2$ are not antipodal, i.e., $\mathbf{p}_1\cdot\mathbf{p}_2\neq -1$) is derived from the formula as
\begin{equation}
  \mathbf{p}_\mathrm{m} = \frac{1}{\sqrt{2} \sqrt{1+\mathbf{p}_1\cdot\mathbf{p}_2}} \left(\mathbf{p}_1+\mathbf{p}_2\right). \label{Eq:MidPoint}
\end{equation}
Therefore, the rotation operation in this study is represented as
\begin{equation}
    \mathrm{rot}(\mathbf{p}, \mathbf{a},\theta) := \mathbf{p} \cos\theta + (\mathbf{a}\times\mathbf{p})\sin\theta+\mathbf{a}(\mathbf{a}\cdot\mathbf{p}) (1-\cos\theta), \label{Eq:Rotation}
\end{equation}
where $\mathbf{p}$, $\mathbf{a}$, and $\theta$ represent the position vector before the rotation, the rotational axis, and the amount of the rotation, respectively.
Using this notation, a small circle $l(\mathbf{a}, \mathbf{r})$ is expressed as $\mathrm{rot}(\mathbf{r}, \mathbf{a}, t)\, (0\le t < 2\pi)$.

The description for obtaining the intersection of two curves should be considered for making computer graphics. Two cases are discussed here. One is the intersections of two great circles, and the other is those of a great circle and a small circle. The equation for the intersections \(\mathbf{q}\) of two great circles \(\mathbf{a}_1\) and \(\mathbf{a}_2\) is
\[
\mathbf{q} = \pm \frac{\mathbf{a}_1\times\mathbf{a}_2}{\sqrt{1-(\mathbf{a}_1\cdot\mathbf{a}_2)^2}}.
\]
These are in antipodal relation, which is clear because a spherical digon consists of two great circles. On the other hand, the intersections of great and small circles, $\mathbf{a}_1$ and ($\mathbf{a}_2$, $\mathbf{p}$), can only be obtained numerically by solving the equation
\[
\mathrm{rot}(\mathbf{p}, \mathbf{a}_2, t)\cdot \mathbf{a}_1 =0,
\]
for $t$ and substituting the solution into $\mathrm{rot}(\mathbf{p}, \mathbf{a}_2, t)$.

\subsection{Definitions of Reflection and Fold Curve}

To define a ``fold" on $\mathbb{S}^2$, the reflection operation on $\mathbb{S}^2$ must be clarified.
``Reflection of $\mathrm{P}$ with respect to the geodesic line $\mathbf{a}$" is a plane-symmetrical transformation of $\mathrm{P}$ with respect to the plane containing the great circle $\mathbf{a}$.
The operation is expressed as
\begin{equation}
    \mathrm{ref}({\mathbf{p}, \mathbf{a}}):=\mathbf{p} -2 (\mathbf{a}\cdot\mathbf{p})\mathbf{a}. \label{Eq:Reflection}
\end{equation}
Figure~\ref{Fig:Reflection} shows an example of reflection.
A spherical point $\mathrm{R}$ transforms to the point $\mathrm{R'}$ with respect to the plane perpendicular to the unit vector $\mathbf{a}_{\mathrm{PQ}}$.
The reflection operation always transfers a spherical point onto the spherical surface without normalization because the reflection plane contains the sphere's center.

\begin{figure}[ht]
\centering
\includegraphics[width=0.5\textwidth]{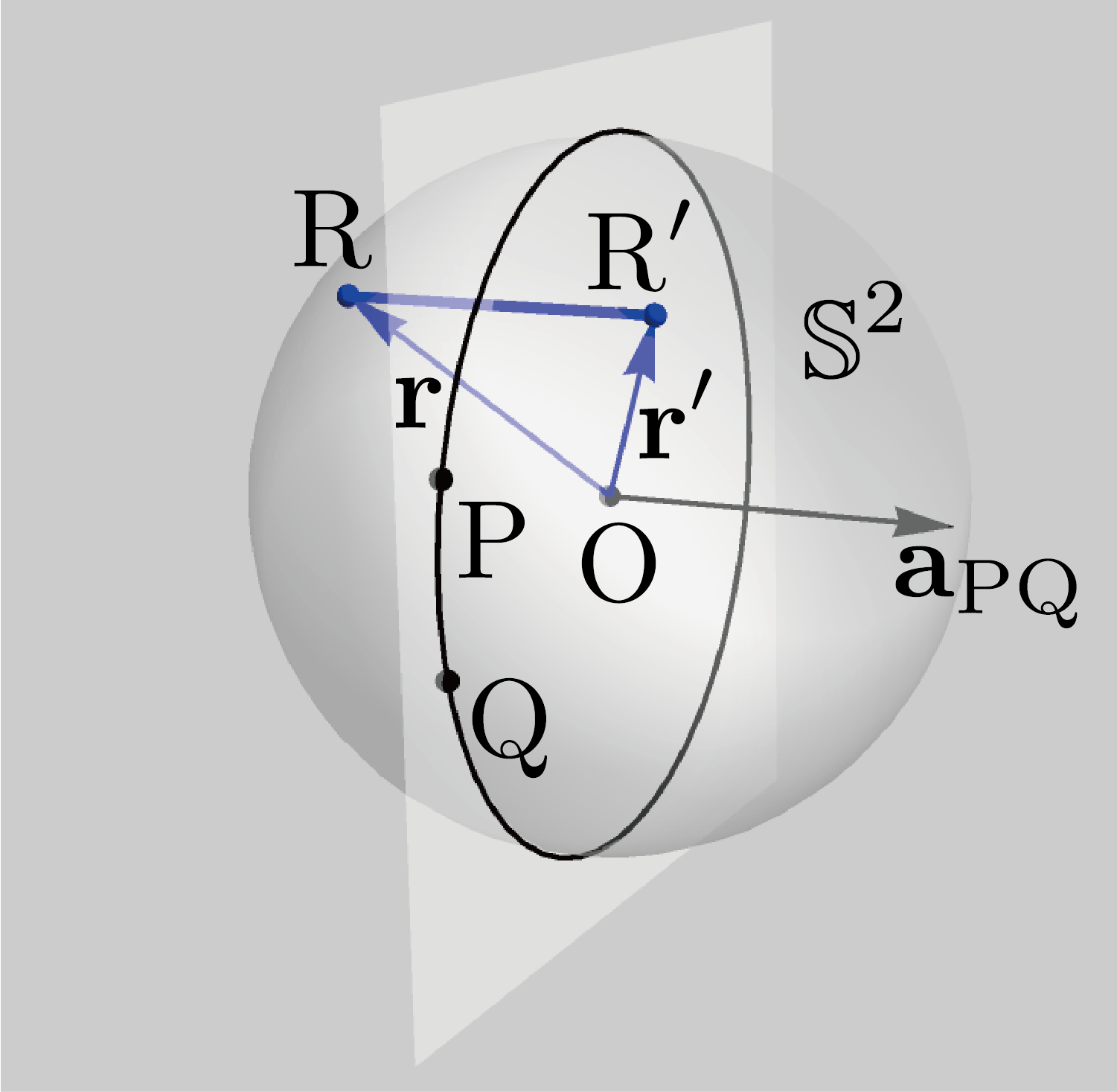}
\caption{Examples of a reflection and a fold curve on $\mathbb{S}^2$.}\label{Fig:Reflection}
\end{figure}

Geodesic lines represent fold curves in most cases.
The correspondence of spherical geometry with Euclidean geometry indicates that the fold curves (not ``fold lines", in a strict sense) on $\mathbb{S}^2$ are great circles because great circles are the concept corresponding to lines in Euclidean origami.
In the following, great circles are chosen as fold curves of origami on $\mathbb{S}^2$.
Let $\mathrm{P}$, $\mathrm{Q}$, and $\mathrm{R}$ be different spherical points and $\mathrm{O}$ be the center of the sphere, as shown in Fig.~\ref{Fig:Reflection}.
The geodesic line from $\mathrm{P}$ to $\mathrm{Q}$, corresponding to the fold curve, is defined by the pole $\mathbf{a}_{\mathrm{PQ}} = (\mathbf{p}\times \mathbf{q})/\left|\mathbf{p}\times \mathbf{q}\right|$.
The geodesic line (great arc) $\mathrm{PQ}$ is written using $\mathbf{a}_{\mathrm{PQ}}$ as
\[
\mathrm{rot}(\mathbf{p},\mathbf{a}_{\mathrm{PQ}},t)\, (0\le t \le \cos^{-1} (\mathbf{p}\cdot\mathbf{q})).
\]
Thus, the point $\mathrm{R}$ in Fig.~\ref{Fig:Reflection} is transferred to $\mathrm{R'}$ by the folding with $\mathbf{a}_{\mathrm{PQ}}$ according to Eq.~\ref{Eq:Reflection}.

\begin{figure}[ht]
\centering
\includegraphics[width=0.9\textwidth]{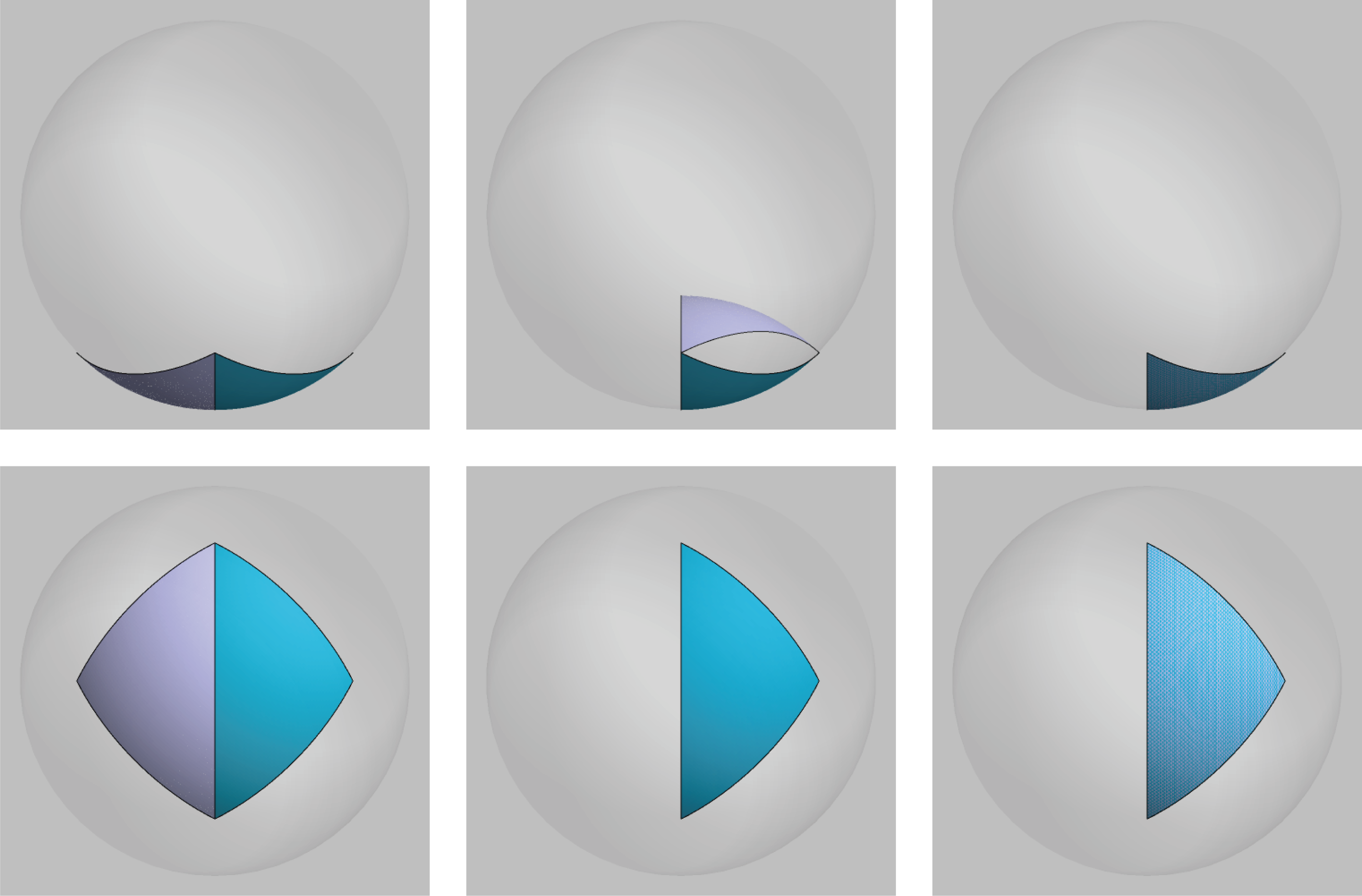}
\caption{Folding of a rhombic spherical sheet. Top and bottom images give top and side views, respectively. }\label{Fig:RotSphTriangle}
\end{figure}

Based on the definitions of a fold curve, folding on $\mathbb{S}^2$ is described as the reflection of a spherical polygon along the fold curve.
A fold on $\mathbb{S}^2$ is not a simple rotation, unlike in Euclidean origami.
Because rotation is carried out with an axis and the axis is a straight line in the 3D space, the line containing the terminal points of a fold curve (great arc) is taken as the axis for rotation.
Figure~\ref{Fig:RotSphTriangle} illustrates a folding operation of a piece of spherical sheet:
The left panels show rhombic spherical sheets divided into two congruent triangles by a great arc.
Our interest is folding the left triangle along the great arc, and the triangles must entirely overlap at the end of the operation.
The middle panels depict the rotation of the left triangle around the axis containing the two common vertices until the rotated vertex contacts the spherical surface or the point reflected across the plane containing the great arc.
The top view (top middle panel) suggests that folding is not equivalent to rotation.
The right panels display the result of the reflection of the triangle across the plane containing the three vertices of the rotated triangle.
As shown, the two triangles overlap entirely and the left triangle has been folded successfully.
This result shows that folding on $\mathbb{S}^2$ necessarily involves a reflection operation.

\subsection{Definitions and Axioms for Spherical Origami} \label{subsec:DefAndAxioms}

This subsection considers the axioms for spherical origami after introducing the definitions. Its objective is to show that origami on $\mathbb{S}^2$ can be constructed using spherical geometry.
Based on the descriptions by Alperin and Lang~\cite{alperinAxioms} listed in Appendix~\ref{Appendix:PlanarOrigami}, the definitions and axioms for spherical geometry are described in one-to-one correspondence.
First, the modified definitions are listed.
Then, the equation or implementation of spherical geometry is presented for each axiom.

The main changes in the definitions are the use of 3D Cartesian coordinates and the use of the pole of a great circle to represent a fold curve.
Recall that the pole of a great circle is always normalized to a unit vector because the definition of the rotation Eq.~\ref{Eq:Rotation} requires using a unit vector as its axis.
Although it is possible to represent arbitrary points using two variables by taking the spherical coordinates, Cartesian coordinates are used in the present study because the extension from Euclidean origami is straightforward. The resulting definitions are listed as follows:

\begin{enumerate}
    \item A point $\mathbf{p}=(x, y, z)$ has the ordered Cartesian coordinates $x$, $y$, and $z$ satisfying $|\mathbf{p}|^2 = x^2+y^2+z^2=1$.
    \item  A geodesic line $\mathbf{a}=(X, Y, Z)$ is the set of points $\mathbf{p}=(x, y, z)$ that satisfy the equations $\mathbf{a}\cdot\mathbf{p}=Xx+Yy+Zz=0$ and $|\mathbf{a}|^2 =X^2+Y^2+Z^2=1$.
    \item The \textit{folded image} $F_{L_\mathrm{F}}(\mathbf{p})$ of a spherical point $\mathbf{p}=(x, y, z)$ across a fold curve $\mathbf{a}_\mathrm{F}=(X_\mathrm{F}, Y_\mathrm{F}, Z_\mathrm{F})$ is the reflection across the fold curve.
    \item The \textit{folded image} $F_{L_\mathrm{F}}(l(\mathbf{a}))$ of a line $\mathbf{a}=(X, Y, Z)$ across a fold line $l(\mathbf{a}_\mathrm{F})$ where $\mathbf{a}_\mathrm{F}=(X_\mathrm{F}, Y_\mathrm{F}, Z_\mathrm{F})$ is the reflection of the line across the fold line.
    \item Given two points $\mathbf{p}_1=(x_1, y_1, z_1)$ and $\mathbf{p}_2=(x_2, y_2, z_2)$, the alignment $\mathrm{P}_1\leftrightarrow \mathrm{P}_2$ is satisfied if and only if $x_1=x_2$, $y_1=y_2$, and $z_1=z_2$.
    \item Given two lines $\mathbf{a}_1=(X_1, Y_1, Z_1)$ and $\mathbf{a}_2=(X_2, Y_2, Z_2)$, the alignment $\mathbf{a}_1\leftrightarrow \mathbf{a}_2$ is satisfied if and only if either $X_1=X_2$, $Y_1=Y_2$, and $Z_1=Z_2$ or $X_1=-X_2$, $Y_1=-Y_2$, and $Z_1=-Z_2$.
    \item Given a point $\mathbf{p}=(x, y, z)$ and a line $\mathbf{a}=(X, Y, Z)$, the alignment $\mathbf{p}\leftrightarrow \mathbf{a}$ is satisfied if and only if $\mathbf{p}\cdot\mathbf{a}= Xx+Yy+Zz=0$.
    \item A one-fold axiom (1FA) is a minimal set of alignments that define a single fold curve on a finite region of the spherical surface with a finite number of solutions.
\end{enumerate}

Next, the origami axioms and their implementations for spherical origami are described sequentially.
These axioms are descriptions of the equations of the fold curves satisfying the requirements.

\begin{theorem}[Spherical Huzita--Justin Axioms]\label{thm:SphericalHJ}
Under the spherical definitions 1--8 above, every Huzita--Justin axiom (Axioms~1--7 of Appendix~\ref{subsec:HuzitaJustin}) admits an explicit construction on $\mathbb{S}^2$. The pole $\mathbf{a}$ of the fold curve realizing the prescribed alignments is given by:
\begin{itemize}
\item a closed-form algebraic expression in the inputs for Axioms~1--4;
\item an explicit system of equations (solvable numerically) for Axioms~5--7.
\end{itemize}
The solution counts mirror the Euclidean case: Axioms~1, 2, and 4 each have a unique solution; Axiom~3 has two solutions (the bisector and external bisector); Axioms~5 and~6 admit zero, one, or several solutions depending on the inputs; and Axiom~7 has a unique solution when $\mathbf{a}_1$ and $\mathbf{a}_2$ are not antipodal.
Consequently, spherical origami possesses the same seven-axiom structure as Euclidean origami, and any fold constructible in Euclidean origami has a spherical counterpart.
\end{theorem}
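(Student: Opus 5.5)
The proof goes axiom by axiom, using the fact that folding along $l(\mathbf{a})$ is $\mathrm{ref}(\cdot,\mathbf{a})$ from Eq.~\ref{Eq:Reflection}. This map is the restriction to $\mathbb{S}^2$ of an orthogonal linear reflection, so it sends great circles to great circles. It acts on poles by the same formula: the image of $l(\mathbf{b})$ is $l(\mathrm{ref}(\mathbf{b},\mathbf{a}))$, since $\mathbf{x}\cdot\mathbf{b}=\mathrm{ref}(\mathbf{x},\mathbf{a})\cdot\mathrm{ref}(\mathbf{b},\mathbf{a})$. Each alignment in Definitions 5--7 then becomes a polynomial equation in the unknown unit pole $\mathbf{a}$.

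\textbf{Axioms~1--4 (closed forms).}
\begin{itemize}
\item \emph{Axiom~1} (fold through $\mathbf{p}_1,\mathbf{p}_2$). Take $\mathbf{a}=(\mathbf{p}_1\times\mathbf{p}_2)/|\mathbf{p}_1\times\mathbf{p}_2|$. It is unique up to sign, which Definition~6 identifies, provided the points are not antipodal.
\item \emph{Axiom~2} (bring $\mathbf{p}_1$ onto $\mathbf{p}_2$). We need $\mathrm{ref}(\mathbf{p}_1,\mathbf{a})=\mathbf{p}_2$, i.e. $\mathbf{p}_1-\mathbf{p}_2=2(\mathbf{a}\cdot\mathbf{p}_1)\mathbf{a}$. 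This forces $\mathbf{a}=\pm(\mathbf{p}_1-\mathbf{p}_2)/|\mathbf{p}_1-\mathbf{p}_2|$, the perpendicular bisector, so the solution is unique.
\item \emph{Axiom~3} (bring $l(\mathbf{a}_1)$ onto $l(\mathbf{a}_2)$). We need $\mathrm{ref}(\mathbf{a}_1,\mathbf{a})=\pm\mathbf{a}_2$. Applying the Axiom~2 computation to each sign gives $\mathbf{a}=(\mathbf{a}_1\mp\mathbf{a}_2)/|\mathbf{a}_1\mp\mathbf{a}_2|$. These are the two bisectors, valid when $l(\mathbf{a}_1)\neq l(\mathbf{a}_2)$.
\item \emph{Axiom~4} (fold through $\mathbf{p}$ perpendicular to $l(\mathbf{a}_1)$). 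We need $\mathbf{a}\cdot\mathbf{p}=0$ and $\mathbf{a}\cdot\mathbf{a}_1=0$, so $\mathbf{a}=(\mathbf{p}\times\mathbf{a}_1)/|\mathbf{p}\times\mathbf{a}_1|$. It is unique unless $\mathbf{p}=\pm\mathbf{a}_1$, in which case every great circle through the pole is perpendicular to $l(\mathbf{a}_1)$ and the degenerate case is excluded.
\end{itemize}

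\textbf{Axioms~5--7 (systems of equations).}
\begin{itemize}
\item \emph{Axiom~5} (fold through $\mathbf{p}_2$ bringing $\mathbf{p}_1$ onto $l(\mathbf{a}_1)$). The system is
\[
\mathbf{a}\cdot\mathbf{p}_2=0,\quad \mathrm{ref}(\mathbf{p}_1,\mathbf{a})\cdot\mathbf{a}_1=0,\quad |\mathbf{a}|=1 .
\]
Parametrize the admissible poles by $\mathbf{a}(t)=\mathrm{rot}(\mathbf{u},\mathbf{p}_2,t)$ with $\mathbf{u}\perp\mathbf{p}_2$; the first and third equations then hold automatically. The second equation becomes a trigonometric polynomial of degree~2 in $t$, so it has zero, one, or two solutions modulo the sign of $\mathbf{a}$. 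Geometrically, the image of $\mathbf{p}_1$ moves on the small circle $l(\mathbf{p}_2,\mathbf{p}_1)$, which meets $l(\mathbf{a}_1)$ in zero, one, or two points, exactly as in the Euclidean case. This matches the great/small circle intersection procedure given earlier.
\item \emph{Axiom~6} (bring $\mathbf{p}_1$ onto $l(\mathbf{a}_1)$ and $\mathbf{p}_2$ onto $l(\mathbf{a}_2)$). The system is
\[
\mathrm{ref}(\mathbf{p}_i,\mathbf{a})\cdot\mathbf{a}_i=0\ (i=1,2),\quad |\mathbf{a}|=1 .
\]
Expanding gives two quadratic forms in $\mathbf{a}$ on the unit sphere. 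Two conics on $\mathbb{S}^2$ meet in at most $4$ antipodal pairs, so there are zero to several solutions. This mirrors the Euclidean cubic, whose degree differs only because the projective setting is different.
\item \emph{Axiom~7} (bring $\mathbf{p}$ onto $l(\mathbf{a}_1)$, fold perpendicular to $l(\mathbf{a}_2)$). We need $\mathbf{a}\cdot\mathbf{a}_2=0$ and $\mathrm{ref}(\mathbf{p},\mathbf{a})\cdot\mathbf{a}_1=0$. Again parametrize $\mathbf{a}(t)=\mathrm{rot}(\mathbf{u},\mathbf{a}_2,t)$ and solve. Every such reflection preserves $l(\mathbf{a}_2)$, and the orbit of $\mathbf{p}$ under these reflections is the small circle $l(\mathbf{a}_2,\mathbf{p})$. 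The claimed uniqueness must therefore come from the correspondence with the Euclidean case: the relevant intersection point of this small circle with $l(\mathbf{a}_1)$ is selected, with the second point lying on the far hemisphere. Non-antipodality of $\mathbf{a}_1,\mathbf{a}_2$ rules out the degenerate case in which the orbit circle coincides with $l(\mathbf{a}_1)$ or fails to meet it.
\end{itemize}

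The main obstacle is the solution counts for Axioms~5--7. On the sphere, reflections come with an antipodal duplication, and each "line" meets each "circle" twice. Obtaining counts that genuinely mirror the Euclidean ones requires restricting to the finite region of Definition~8, i.e. an open hemisphere containing the data. My first attempt will be to impose this restriction and then apply the gnomonic (central) projection from the center onto the tangent plane at the hemisphere's center. This projection maps great circles bijectively to lines, so Axioms~1, 4, 5 and 6 transfer incidence-wise to the Euclidean configurations.

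Reflections, however, are not projective images of Euclidean reflections, so the counts for the axioms involving $\mathrm{ref}$ must be checked directly from the degree-2 equations. I will then verify the final claim constructively: each Euclidean fold sequence is a finite sequence of these axioms, and each axiom has a spherical counterpart whose solution set has the same structure.
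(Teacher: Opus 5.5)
\textbf{Axioms~1--6.} For Axioms~1--5 you follow the paper. Your $\pm(\mathbf{p}_1-\mathbf{p}_2)/|\mathbf{p}_1-\mathbf{p}_2|$ is the paper's normalized $\mathbf{p}_\mathrm{m}\times\mathbf{a}_{12}$, since $(\mathbf{p}_1+\mathbf{p}_2)\times(\mathbf{p}_1\times\mathbf{p}_2)=(1+\mathbf{p}_1\cdot\mathbf{p}_2)(\mathbf{p}_1-\mathbf{p}_2)$. Axiom~5 is the same small-circle/great-circle intersection.

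For Axiom~6 you write the system directly as the two quadrics $\mathrm{ref}(\mathbf{p}_i,\mathbf{a})\cdot\mathbf{a}_i=0$, whereas the paper parametrizes $\mathrm{P}_1',\mathrm{P}_2'$ by $t_1,t_2$. Yours is the right system: the paper's displayed conditions $\mathbf{a}_i\cdot(\mathbf{p}_{1\mathrm{m}}\times\mathbf{p}_{2\mathrm{m}})=0$, read literally, would force $\mathbf{a}\parallel\mathbf{a}_1\times\mathbf{a}_2$. Your B\'ezout count does need the proviso that the two conics share no component.

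\textbf{The gap is Axiom~7.} Your orbit argument is correct. An admissible fold has $\mathbf{a}\perp\mathbf{a}_2$, so it preserves $\mathbf{a}_2\cdot\mathbf{x}$. The possible images of $\mathrm{P}_1$ are therefore exactly the points where the affine line $\{\mathbf{x}:\mathbf{a}_2\cdot\mathbf{x}=\mathbf{a}_2\cdot\mathbf{p}_1,\ \mathbf{a}_1\cdot\mathbf{x}=0\}$ meets $\mathbb{S}^2$. That line lies at distance $|\mathbf{a}_2\cdot\mathbf{p}_1|/\sqrt{1-(\mathbf{a}_1\cdot\mathbf{a}_2)^2}$ from the origin. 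So for non-antipodal $\mathbf{a}_1,\mathbf{a}_2$ there are two images when this distance is less than $1$ and none when it exceeds $1$.

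Your claim that non-antipodality keeps the orbit circle from missing $l(\mathbf{a}_1)$ is therefore false. The step ``select the relevant intersection, the other lies on the far hemisphere'' also has no basis. Take $\mathbf{a}_2=(0,0,1)$, $\mathbf{a}_1=(1,0,0)$, and $\mathrm{P}_1$ at latitude $\phi$, longitude $\lambda$. The meridians at longitudes $\lambda/2\pm\pi/4$ are both perpendicular to $l(\mathbf{a}_2)$, and they send $\mathrm{P}_1$ to $(0,\pm\cos\phi,\sin\phi)\in l(\mathbf{a}_1)$. For $\phi$ near $\pi/2$ everything happens in a small cap around $\mathbf{a}_2$. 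So neither the finite-region clause of Definition~8 nor a gnomonic chart of a hemisphere containing the data discards one of them; the chart is a bijection and cannot reduce the count.

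\textbf{Consequence.} The uniqueness asserted for Axiom~7 cannot be proved, because it is not true on $\mathbb{S}^2$. The correct count is zero, one (tangency) or two, exactly as for Axiom~5. Your closing claim that every axiom's solution set ``has the same structure'' as in the Euclidean case fails here. For Axiom~6 the B\'ezout bound is also four rather than three.

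You cannot borrow the paper's route to uniqueness either. It imposes $\mathbf{a}_2\cdot(\mathbf{p}_1\times\mathbf{p}_1')=0$, which intersects $l(\mathbf{a}_1)$ with the great circle through $\mathrm{P}_1$ and $\mathbf{a}_2$ and so yields a single antipodal pair. But then $\mathbf{a}_2,\mathbf{p}_1,\mathbf{p}_1',\mathbf{p}_\mathrm{m}$ are coplanar, so the pole $\mathbf{a}_2\times\mathbf{p}_\mathrm{m}$ is parallel to $\mathbf{p}_1\times\mathbf{p}_1'$. That fold passes through $\mathrm{P}_1$ and leaves it fixed.

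The correct locus condition is $\mathbf{a}_2\cdot(\mathbf{p}_1-\mathbf{p}_1')=0$ with $\mathbf{p}_1'\in l(\mathbf{a}_1)$; with it, the paper's final formula $\mathbf{a}_2\times\mathbf{p}_\mathrm{m}$ is correct. Writing $\mathbf{p}_1'=\mathrm{rot}(\mathbf{q},\mathbf{a}_1,t)$, the condition reduces to $\sin t=c$ with $|c|=|\mathbf{a}_2\cdot\mathbf{p}_1|/\sqrt{1-(\mathbf{a}_1\cdot\mathbf{a}_2)^2}$. Your argument actually establishes this corrected, at-most-two-solutions form of Axiom~7. That is what you should state, rather than forcing a Euclidean-style selection.
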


\begin{proof}
We exhibit the fold-pole construction for each axiom in turn.

Axiom 1 states, ``Given two points $\mathrm{P}_1$ and $\mathrm{P}_2$, we can fold a line connecting them." When $\mathrm{P}_1$ and $\mathrm{P}_2$ are not antipodal, the fold curve is given by the pole of the great circle $\mathbf{a}$ as
\[
\mathbf{a}=\frac{\mathbf{p}_1\times\mathbf{p}_2 }{\left| \mathbf{p}_1\times\mathbf{p}_2\right|}.
\]
The pole is unique up to sign ($\pm\mathbf{a}$ represent the same great circle), so the fold curve is uniquely determined.

Axiom 2 states, ``Given two points $\mathrm{P}_1$ and $\mathrm{P}_2$, we can fold $\mathrm{P}_1$ onto $\mathrm{P}_2$." The fold curve $\mathbf{a}$ is expressed as
\[
\mathbf{a}= \frac{\mathbf{p}_{\mathrm{m}}\times\mathbf{a}_{12} }{\left| \mathbf{p}_{\mathrm{m}}\times\mathbf{a}_{12}\right|},
\]
where $\mathbf{a}_{12}$ is
\[
\mathbf{a}_{12}=\frac{\mathbf{p}_1\times\mathbf{p}_2 }{ \left| \mathbf{p}_1\times\mathbf{p}_2\right|},
\]
and $\mathbf{p}_{\mathrm{m}}$ is the midpoint of $\mathrm{P}_1$ and $\mathrm{P}_2$ calculated from Eq.~\ref{Eq:MidPoint}.

Axiom 3 states, ``Given two lines $\mathbf{a}_1$ and $\mathbf{a}_2$, we can fold line $\mathbf{a}_1$ onto $\mathbf{a}_2$."
As in the Euclidean case, there are in general two solutions corresponding to the two angle bisectors of $\mathbf{a}_1$ and $\mathbf{a}_2$.
The poles of the two fold curves are
\[
\mathbf{a} = \frac{1}{\sqrt{2} \sqrt{1+\mathbf{a}_1\cdot\mathbf{a}_2}} \left(\mathbf{a}_1+\mathbf{a}_2\right)
\quad \text{and} \quad
\mathbf{a} = \frac{1}{\sqrt{2} \sqrt{1-\mathbf{a}_1\cdot\mathbf{a}_2}} \left(\mathbf{a}_1-\mathbf{a}_2\right),
\]
where the first solution is defined when $\mathbf{a}_1\cdot\mathbf{a}_2\neq -1$ and the second when $\mathbf{a}_1\cdot\mathbf{a}_2\neq 1$ (i.e., when $\mathbf{a}_1\neq\pm\mathbf{a}_2$).
These two fold curves are perpendicular to each other and correspond to the bisector and external bisector of the angle between the two great circles.

Axiom 4 states, ``Given a point $\mathrm{P}_1$ and a line $l(\mathbf{a}_1)$, we can make a fold perpendicular to $l(\mathbf{a}_1)$ passing through the point $\mathrm{P}_1$."
The pole of the fold curve $\mathbf{a}$ is
\[
\mathbf{a}=\frac{\mathbf{p}_1\times\mathbf{a}_1 }{ \left| \mathbf{p}_1\times\mathbf{a}_1\right|}.
\]

Axiom 5 states, ``Given two points $\mathrm{P}_1$ and $\mathrm{P}_2$ and a line $\mathbf{a}_1$, we can make a fold that places $\mathrm{P}_1$ onto $\mathbf{a}_1$ and passes through the point $\mathrm{P}_2$."
A solution exists if and only if the small circle $l(\mathbf{p}_2, \mathbf{p}_1)$ (the circle through $\mathrm{P}_1$ having pole $\mathrm{P}_2$) intersects the great circle $\mathbf{a}_1$; this fails when the spherical distance from $\mathrm{P}_1$ to $\mathbf{a}_1$ exceeds the radius of $l(\mathbf{p}_2, \mathbf{p}_1)$.
When two intersections exist, each gives a distinct fold curve; when the circle is tangent to $\mathbf{a}_1$, there is exactly one solution.
Let $\mathrm{P}_1'$ be one of the intersections of $l(\mathbf{p}_2, \mathbf{p}_1)$ and $\mathbf{a}_1$.
Then, the pole of the corresponding fold curve $\mathbf{a}$ is
\[
\mathbf{a}=\frac{\mathbf{p}_{\mathrm{m}}\times\mathbf{p}_{2} }{\left| \mathbf{p}_{\mathrm{m}}\times\mathbf{p}_{2}\right|} ,
\]
where $\mathbf{p}_{\mathrm{m}}$ is the position vector of the midpoint of $\mathrm{P}_1$ and $\mathrm{P}_1'$.

Axiom 6 states, ``Given two points $\mathrm{P}_1$ and $\mathrm{P}_2$ and two lines $\mathbf{a}_1$ and $\mathbf{a}_2$, we can make a fold that places $\mathrm{P}_1$ onto line $\mathbf{a}_1$ and $\mathrm{P}_2$ onto line $\mathbf{a}_2$."
Let one of the intersections of $\mathbf{a}_1$ and $\mathbf{a}_2$ be $\mathrm{Q}$ and define two transferred points after the fold as $\mathrm{P}_1'$ on $\mathbf{a}_1$ and  $\mathrm{P}_2'$ on $\mathbf{a}_2$.
The spherical distances from $\mathrm{Q}$ to $\mathrm{P}_1'$ and $\mathrm{P}_2'$ are set to be $t_1$ and $t_2$ satisfying the requirements. Then, the position vectors $\mathbf{p}_1'$ and  $\mathbf{p}_2'$ are expressed as
$\mathbf{p}_1' = \mathrm{rot}(\mathbf{q}, \mathbf{a}_1, t_1)$ and $\mathbf{p}_2' = \mathrm{rot}(\mathbf{q}, \mathbf{a}_2, t_2)$, respectively.
The two midpoints $\mathrm{P}_\mathrm{1m}$ and $\mathrm{P}_\mathrm{2m}$ are set as those of $\mathrm{P}_1$ and $\mathrm{P}_1'$ and $\mathrm{P}_2$ and $\mathrm{P}_2'$, respectively. Then, the values of $t_1$ and $t_2$ are obtained by solving simultaneous equations $\mathbf{a}_1\cdot(\mathbf{p}_\mathrm{1m} \times \mathbf{p}_\mathrm{2m})=0$ and $\mathbf{a}_2\cdot(\mathbf{p}_\mathrm{1m} \times \mathbf{p}_\mathrm{2m})=0$. Finally, the pole of the fold curve $\mathbf{a}$ is obtained as
\[
\mathbf{a}=\frac{\mathbf{p}_\mathrm{1m} \times \mathbf{p}_\mathrm{2m}}{\left| \mathbf{p}_\mathrm{1m} \times \mathbf{p}_\mathrm{2m}\right|}
\]
by substituting in the solutions $t_1$ and $t_2$ of the simultaneous equations.
As in the Euclidean case, where the analogous constraint reduces to a cubic equation yielding up to three solutions~\cite{alperinAxioms}, the spherical simultaneous equations are also nonlinear and may admit multiple solutions; in the present implementation these are obtained numerically.

Axiom 7 states, ``Given a point $\mathrm{P}_1$ and two lines $\mathbf{a}_1$ and $\mathbf{a}_2$, we can make a fold perpendicular to $\mathbf{a}_2$ that places $\mathrm{P}_1$ onto line $\mathbf{a}_1$."
Let the intersection of $\mathbf{a}_1$ and $\mathbf{a}_2$ be $\mathrm{Q}$ and define a point $\mathrm{P}_1'$ on $\mathbf{a}_1$ such that the spherical distance from $\mathrm{Q}$ is $t$, so that $\mathbf{p}_1' = \mathrm{rot}(\mathbf{q}, \mathbf{a}_1, t)$.
By solving the equation $\mathbf{a}_2\cdot (\mathbf{p}_1\times\mathbf{p}_1')=0$, the value of $t$ can be obtained.
Point $\mathrm{P}_\mathrm{m}$, the midpoint of $\mathrm{P}_1$ and $\mathrm{P}_1'$, is determined by substituting the value of $t$ into $\mathbf{p}_1'$.
Finally, the pole of the fold curve $\mathbf{a}$ is obtained as
\[
\mathbf{a}=\frac{\mathbf{a}_2 \times \mathbf{p}_\mathrm{m}}{\left| \mathbf{a}_2 \times \mathbf{p}_\mathrm{m}\right|}.
\]
The pole is uniquely determined when $\mathbf{a}_1$ and $\mathbf{a}_2$ are not antipodal so that $\mathrm{Q}$ is well defined.
This completes the construction for all seven axioms.
\end{proof}

\subsection{Flat-Foldability on $\mathbb{S}^2$} \label{subsec:FlatFoldability}

Beyond the existence of the seven axioms, the one-fold structure on $\mathbb{S}^2$ inherits the Kawasaki--Justin flat-foldability condition from the Euclidean case. The proof relies on the fact that the local tangent plane at any vertex of $\mathbb{S}^2$ is Euclidean, so the standard reflection-composition argument transfers verbatim.

\begin{theorem}[Spherical Kawasaki--Justin]\label{thm:SphericalKJ}
Let a single-vertex crease pattern on $\mathbb{S}^2$ consist of $n$ great-circle fold curves through a common vertex, with consecutive face angles $\alpha_1, \alpha_2, \ldots, \alpha_n$ summing to $2\pi$.
The pattern is flat-foldable (i.e., foldable entirely within $\mathbb{S}^2$) if and only if $n$ is even and the alternating sum of the face angles vanishes:
\[
\alpha_1 - \alpha_2 + \alpha_3 - \cdots - \alpha_n = 0.
\]
\end{theorem}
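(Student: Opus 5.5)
The plan is to reduce the statement to a statement about the rotation group fixing the vertex, and then to redo the Euclidean reflection-composition argument in that setting. Place the common vertex at $\mathbf{v}$ (say the north pole). Every great-circle fold curve through $\mathbf{v}$ has a pole $\mathbf{a}_i$ orthogonal to $\mathbf{v}$, so each reflection $\mathrm{ref}(\cdot,\mathbf{a}_i)$ of Eq.~\ref{Eq:Reflection} fixes $\mathbf{v}$. Each such reflection therefore restricts to an orthogonal reflection of the tangent plane $T_{\mathbf{v}}\mathbb{S}^2=\mathbf{v}^\perp$ across the line spanned by the tangent direction $\mathbf{u}_i=\mathbf{a}_i\times\mathbf{v}$ of the crease. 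Conversely, an isometry of $\mathbb{S}^2$ fixing $\mathbf{v}$ is determined by its action on $\mathbf{v}^\perp$. The spherical angle between consecutive creases at $\mathbf{v}$ equals the Euclidean angle between $\mathbf{u}_i$ and $\mathbf{u}_{i+1}$. Hence the face angles $\alpha_i$ are exactly the sector angles of an ordinary Euclidean single-vertex pattern in $\mathbf{v}^\perp$ with the same combinatorics.

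Next I will formalize flat-foldability in the spirit of Section~\ref{subsec:DefAndAxioms}. A flat folding assigns to each sector $S_i$ (the lune-shaped region between creases $i$ and $i+1$) an isometry $\phi_i$ of $\mathbb{S}^2$, with $\phi_1=\mathrm{id}$. Adjacent sectors must be related by $\phi_{i+1}=\phi_i\circ \mathrm{ref}(\cdot,\mathbf{a}_{i+1})$, since the fold along crease $i+1$ is a reflection, as argued with Fig.~\ref{Fig:RotSphTriangle}. Going once around the vertex, consistency requires
\[
\mathrm{ref}(\cdot,\mathbf{a}_1)\circ\mathrm{ref}(\cdot,\mathbf{a}_n)\circ\cdots\circ\mathrm{ref}(\cdot,\mathbf{a}_2)=\mathrm{id}.
\]
A product of $n$ reflections has determinant $(-1)^n$, so $n$ must be even. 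For $n$ even, group the reflections in consecutive pairs. The composition of the reflections in creases $i$ and $i+1$ is the rotation about $\mathbf{v}$ by $2\alpha_i$, by Rodrigues' formula Eq.~\ref{Eq:Rotation} with axis $\mathbf{v}$, or simply by the planar fact in $\mathbf{v}^\perp$. The total product is then a rotation about $\mathbf{v}$ by $2(\alpha_1-\alpha_2+\cdots-\alpha_n)$, keeping track of orientation signs. This rotation is the identity iff the alternating sum is $\equiv 0 \pmod{\pi}$. Combined with $\sum\alpha_i=2\pi$ and $0<\alpha_i$, this gives the alternating-sum condition. The necessity direction thus follows.

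For sufficiency, I will transfer the Euclidean Kawasaki--Justin theorem directly. Assuming the alternating sum vanishes and $n$ is even, the tangent-plane pattern is flat-foldable in $\mathbf{v}^\perp$: there is a mountain--valley assignment and a layer ordering with no self-intersection. I then map this to the sphere via the exponential map at $\mathbf{v}$ restricted to the open hemisphere, or radially with respect to the lunes. Since every $\phi_i$ fixes $\mathbf{v}$ and commutes with this radial structure, the folded sectors on $\mathbb{S}^2$ are the images of the folded planar sectors, and the layer ordering transfers.

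I expect the main obstacle to be this last step. The first difficulty is the mod-$\pi$ ambiguity: I must rule out an alternating sum of $\pm\pi$, which I would handle using the standard planar argument that the folded image winds consistently. The second difficulty is the non-interference of layers globally on the sphere. The lunes extend to the antipode $-\mathbf{v}$, where all the creases meet again, so the pattern is really a two-vertex pattern. Here I would argue that the reflections fix $-\mathbf{v}$ as well and induce the same angle data there, so the local argument applies symmetrically. Alternatively, I would restrict to a neighborhood of $\mathbf{v}$, which is the sense of ``local-tangent-plane argument'' intended in the paper.
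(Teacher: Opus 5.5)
Your route is the paper's route: reduce to the tangent plane at the vertex, get necessity from the closure of the composite of crease reflections, and import sufficiency from the Euclidean Kawasaki--Justin theorem. The parity step via $\det=(-1)^n$ is fine. The gap is in the one computation that carries the necessity direction.

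\textbf{The rotation angle is miscounted.} Group your closure product into its $n/2$ disjoint consecutive pairs: $\mathrm{ref}(\cdot,\mathbf{a}_3)\circ\mathrm{ref}(\cdot,\mathbf{a}_2)$, $\mathrm{ref}(\cdot,\mathbf{a}_5)\circ\mathrm{ref}(\cdot,\mathbf{a}_4)$, \dots, $\mathrm{ref}(\cdot,\mathbf{a}_1)\circ\mathrm{ref}(\cdot,\mathbf{a}_n)$. These give rotations by $2\alpha_2,2\alpha_4,\dots,2\alpha_n$ only; the odd-indexed angles never appear. So the composite is the rotation about $\mathbf{v}$ by $2(\alpha_2+\alpha_4+\cdots+\alpha_n)$. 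Since $\sum\alpha_i=2\pi$, this equals $2\pi-A$, where $A=\alpha_1-\alpha_2+\cdots-\alpha_n$. The composite therefore rotates by (minus) the alternating sum, not by twice it.

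This is not a harmless variant, because your next claim fails. From $2A\equiv 0\pmod{2\pi}$ you only get $A\in\{-\pi,0,\pi\}$. The facts $\sum\alpha_i=2\pi$ and $\alpha_i>0$ give merely $|A|<2\pi$, so they do not exclude $A=\pm\pi$. Examples are $n=2$ with $(\alpha_1,\alpha_2)=(\pi/2,3\pi/2)$, or $n=4$ with $(3\pi/4,\pi/4,3\pi/4,\pi/4)$. Your formula would call the composite trivial there, but it is actually the half-turn about $\mathbf{v}$.

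\textbf{The repair is short.} The composite is the identity iff $\alpha_2+\alpha_4+\cdots+\alpha_n\in\pi\mathbb{Z}$. This sum lies strictly between $0$ and $2\pi$, so it equals $\pi$, which is exactly $A=0$. The ``mod-$\pi$ ambiguity'' you list as your first obstacle is thus an artifact of the miscount. It needs no separate winding argument, and in any case it belongs to necessity, not sufficiency.

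The paper's sketch states the angle the same way (``twice the alternating sum \dots\ vanishes modulo $2\pi$''), so following it literally inherits the slip; the correct bookkeeping fixes both.

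Your sufficiency transfer is sound at the level of the paper's sketch: every $\phi_i$ fixes $\mathbf{v}$ and so commutes with $\exp_{\mathbf{v}}$, and the planar folded state and its layer order carry over sector by sector. Your remark that the great-circle creases meet again at $-\mathbf{v}$ with the same angle data is also sound, and it is a point the paper does not address.
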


\begin{proof}
The argument is local at the vertex. Since spherical geometry coincides with Euclidean geometry on the tangent plane at the vertex, the Euclidean Kawasaki--Justin theorem~\cite{hull2012project,alperinAxioms} transfers directly. We sketch the necessary direction explicitly.

A flat fold on $\mathbb{S}^2$ is a reflection across the great circle representing the fold curve (Eq.~\ref{Eq:Reflection}), and the composition of $n$ such reflections through a common vertex is an isometry fixing that vertex.
For the pattern to fold flat, this composite must be the identity, which holds if and only if $n$ is even and the resulting rotation angle---twice the alternating sum of the angles between consecutive fold lines---vanishes modulo $2\pi$.
The converse---that the angle condition implies the existence of a flat-folded state---is classical~\cite{hull2012project} and rests on the same local Euclidean structure at the vertex.
\end{proof}

\begin{corollary}[Equivalence with Robertson's $f$-tilings]\label{cor:Ftiling}
A flat-foldable single-vertex crease pattern on $\mathbb{S}^2$ in the sense of Theorem~\ref{thm:SphericalKJ} arises as the local vertex structure of a Robertson spherical $f$-tiling~\cite{robertson1977}, where the singularity set of an isometric folding satisfies the alternating angle-sum condition $\alpha_1 + \alpha_3 + \cdots = \alpha_2 + \alpha_4 + \cdots = \pi$ at each vertex of even valency.
\end{corollary}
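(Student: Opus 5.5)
The plan is to derive the corollary directly from Theorem~\ref{thm:SphericalKJ} by matching its vertex condition to Robertson's local condition, and then to show that the single-vertex pattern extends to a global $f$-tiling of $\mathbb{S}^2$.

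\textbf{Step 1 (algebraic equivalence).} Under the hypotheses of Theorem~\ref{thm:SphericalKJ}, $n$ is even and
\[
\sum_{k\ \mathrm{odd}}\alpha_k-\sum_{k\ \mathrm{even}}\alpha_k=0,
\qquad
\sum_{k\ \mathrm{odd}}\alpha_k+\sum_{k\ \mathrm{even}}\alpha_k=2\pi .
\]
Solving this pair gives $\alpha_1+\alpha_3+\cdots=\alpha_2+\alpha_4+\cdots=\pi$, which is exactly Robertson's alternating angle-sum condition at a vertex of even valency. The converse also holds, so the two local conditions coincide.

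\textbf{Step 2 (globalization).} Robertson's theorem characterizes the singularity set of an isometric folding as a tiling, so we must exhibit an actual isometric folding of $\mathbb{S}^2$ with the given vertex. Each great-circle fold curve through the vertex $\mathrm{V}$ is a full great circle. Extended, these curves pass through the antipode $-\mathbf{v}$ and divide $\mathbb{S}^2$ into $n$ lunes with angles $\alpha_1,\ldots,\alpha_n$. This arrangement is an edge-to-edge tiling with exactly two vertices, $\mathrm{V}$ and $-\mathrm{V}$, each of valency $n$. At $-\mathrm{V}$ the angles are the same $\alpha_k$ in reversed cyclic order, so the alternating condition still holds there.

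To build the folding map, fix one lune $L_1$. On the lune $L_k$, define the map as the composition of the reflections $\mathrm{ref}(\cdot,\mathbf{a}_j)$, given by Eq.~\ref{Eq:Reflection}, across the fold curves crossed on the way from $L_1$ to $L_k$.

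\textbf{Step 3 (well-definedness).} The map is well defined because the composition of all $n$ reflections around $\mathrm{V}$ is the identity. This is precisely the content of the necessary direction in the proof of Theorem~\ref{thm:SphericalKJ}. The resulting map is continuous, is piecewise an isometry, and sends geodesic segments to piecewise geodesics, so it is an isometric folding in Robertson's sense. Its singularity set is the union of the great circles, which is the tiling described in Step 2.

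\textbf{Main obstacle.} The delicate point is verifying Robertson's definition precisely, namely that the piecewise map preserves lengths of all paths, including those crossing $\mathrm{V}$ or $-\mathrm{V}$. I would handle this locally. Near any interior point of a fold curve, the map is the identity on one side and a reflection on the other, so lengths are preserved. At $\mathrm{V}$ and $-\mathrm{V}$, use the tangent-plane argument of Theorem~\ref{thm:SphericalKJ}, where the map looks like a Euclidean flat fold, which is known to be isometric.
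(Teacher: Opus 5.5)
\emph{Comparison with the paper.} Your Step~1 is the paper's entire proof. The paper combines $\sum_i\alpha_i=2\pi$ with the alternating-sum condition of Theorem~\ref{thm:SphericalKJ} to get $\alpha_1+\alpha_3+\cdots=\alpha_2+\alpha_4+\cdots=\pi$, calls this Robertson's vertex condition, and obtains the converse by reversing the algebra. Your Steps~2--3 go beyond this. The paper tacitly identifies ``satisfies Robertson's angle condition at a vertex'' with ``is a vertex of an $f$-tiling.'' You instead exhibit an actual $f$-tiling containing the given vertex: the $n$ lunes cut out by meridians from $\mathrm{V}$ to $-\mathrm{V}$, together with an isometric folding whose singular set it is. That is what ``arises as the local vertex structure'' literally requires. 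The paper's route is shorter and gives a clean two-way equivalence of local conditions. Yours costs a check of Robertson's definition but proves existence.

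That check is easier than your last paragraph suggests. Every reflecting plane contains the axis through $\pm\mathbf{v}$. So $R_1R_2\cdots R_n$, with $R_j=\mathrm{ref}(\cdot,\mathbf{a}_j)$, is globally on $\mathbb{S}^2$ (not just in the tangent plane) the rotation about that axis by $\pm2(\alpha_1+\alpha_3+\cdots+\alpha_{n-1})=\pm2\pi$, i.e.\ the identity. Your map is the restriction of an orthogonal transformation on each closed lune. Hence any geodesic segment, which either runs along a meridian or crosses the meridians finitely often, goes to a broken geodesic of the same length, including through $\pm\mathrm{V}$.

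\emph{Two slips to repair.} First, extend each crease only as far as the antipode, i.e.\ to a semicircular meridian. If you extend it to the full great circle, it returns to $\mathrm{V}$ along the opposite ray. That creates creases not in the pattern and destroys the $n$-lune count. Correspondingly, the singular set is the union of the $n$ meridians, not of great circles.

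Second, the order of composition matters. Let $L_k$ be the lune between meridians $m_k$ and $m_{k+1}$, and put $f|_{L_k}=R_2\circ R_3\circ\cdots\circ R_k$, with the first-crossed reflection outermost. Then $f|_{L_{k+1}}=f|_{L_k}\circ R_{k+1}$ agrees with $f|_{L_k}$ on $m_{k+1}$. The opposite order is discontinuous in general. With the correct order, continuity across $m_1$ is exactly the condition $R_1R_2\cdots R_n=\mathrm{id}$ established above.
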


\begin{proof}
The condition $\sum_i \alpha_i = 2\pi$ together with the alternating sum equation of Theorem~\ref{thm:SphericalKJ} yields $\alpha_1 + \alpha_3 + \cdots = \alpha_2 + \alpha_4 + \cdots = \pi$, which is Robertson's condition for an $f$-tiling vertex. The converse follows by reversing the argument.
\end{proof}

Theorem~\ref{thm:SphericalKJ} guarantees that any Euclidean flat-foldable single-vertex pattern transfers to $\mathbb{S}^2$ without modification of the angle condition. The next subsection illustrates this with the concrete case of origami birds.

\subsection{Spherical Origami Birds on $\mathbb{S}^2$}

The folding of an origami bird on $\mathbb{S}^2$ is discussed in this subsection. The seven axioms of Section~\ref{subsec:DefAndAxioms} (Theorem~\ref{thm:SphericalHJ}) provide the construction primitives, while Theorem~\ref{thm:SphericalKJ} ensures that any single-vertex crease pattern satisfying the alternating angle condition can be folded flat on $\mathbb{S}^2$. Combined, these results allow direct translation of Euclidean origami constructions to $\mathbb{S}^2$.

\begin{figure}[ht]
\centering
\includegraphics[width=0.6\textwidth]{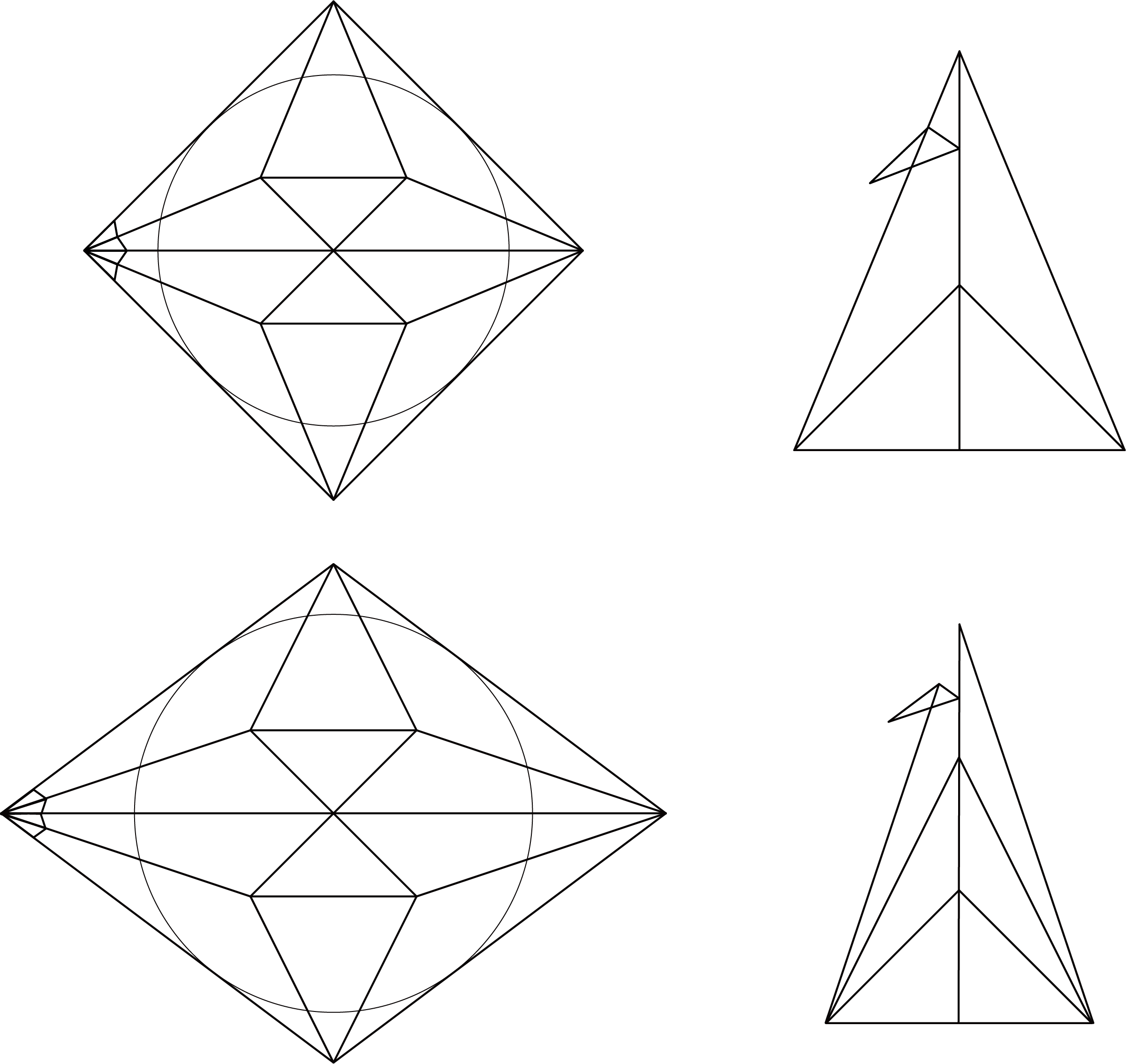}
\caption{Foldout diagrams and the folded forms before the expansions of wings for Euclidean origami birds. Top: a square sheet. Bottom: a rhombic sheet.}\label{Fig:FoldoutPlanarBird}
\end{figure}

For Euclidean origami, Kawasaki's incircle condition for origami birds~\cite{kawasakiBook}---a sheet-level specialization rather than the single-vertex angle condition of Theorem~\ref{thm:SphericalKJ}---states that origami birds or cranes can be folded if the sheets can be inscribed with a circle.
Figure~\ref{Fig:FoldoutPlanarBird} illustrates two cases of foldout diagrams with inscribed circles (left) and the results of the folds before the expansions of the wings (right).
The drawings on the right represent the final output of the folding process on a plane; only the expansions of the wings to 3D space remain.
Although the expansion is trivial for Euclidean origami, it is not easy for spherical origami, as discussed in the following section.
The tips of the wings and the tail overlap in the case of a square sheet (bottom right) because the distances from the center of the sheet to the tips of the wings and tail are equal.
Hidden line removal was not applied to the images in the figure.

\begin{figure}[ht]
\centering
\includegraphics[width=0.95\textwidth]{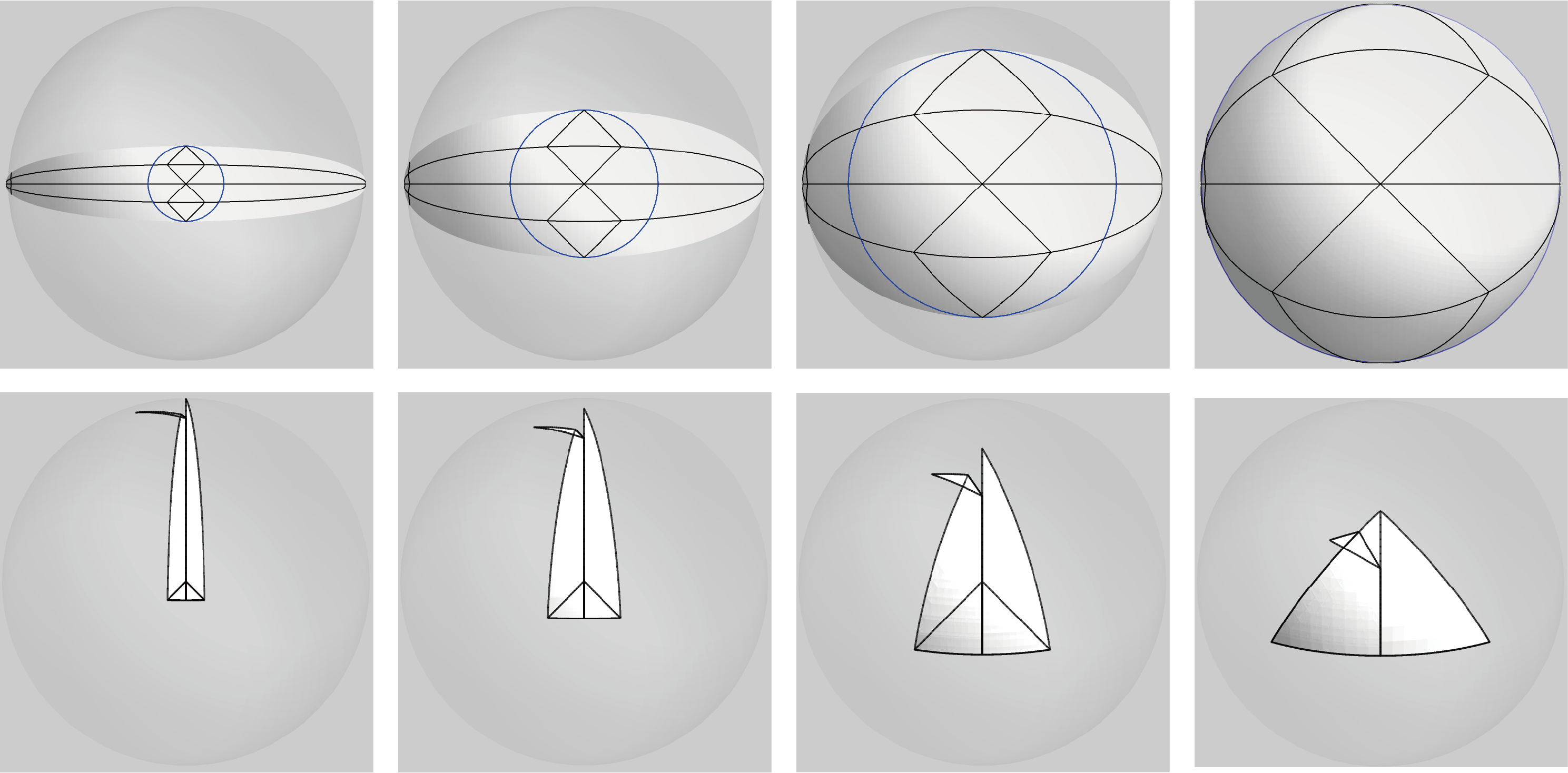}
\caption{Examples of foldout diagrams and folds on $\mathbb{S}^2$ for a hemisphere (right) and spherical digons (others).}\label{Fig:FoldoutAndResultsInS2}
\end{figure}

Foldout diagrams of spherical origami birds can be drawn if the sheets are spherical digons.
Spherical digons whose inner angle is less than or equal to $\pi$ satisfy Kawasaki's incircle condition for origami birds~\cite{kawasakiBook}.
Figure~\ref{Fig:FoldoutAndResultsInS2} shows the foldout diagrams for origami birds.
The top row shows foldout diagrams for spherical digons and hemispherical sheets with their inscribed circles. The images were drawn based on the foldout diagrams for Euclidean origami in Fig.~\ref{Fig:FoldoutPlanarBird}.
The origami axioms confirm the drawing processes. For example, axiom 3 was used to divide angles equally into two.
The diagram is determined automatically by the inner angle value.
The differences among the four images are due to the amount of the inner angle, which from left to right is $\pi/8$, $\pi/4$, $\pi/2$,  and $\pi$.

The bottom row of Fig.~\ref{Fig:FoldoutAndResultsInS2} shows the final forms within $\mathbb{S}^2$. Each image was obtained from the above foldout diagram. The final folds are performed using the reflections (Eq.~\ref{Eq:Reflection}) across suitable fold curves and the fold order. The right-most form is folded from a hemispherical sheet.
The tips of the wings, tail, and back overlap completely in this case. The results indicate that the spherical origami on $\mathbb{S}^2$ works well, at least for origami birds.

\section{Spherical Origami in 3D} \label{sec:3D}

\subsection{New Definition of Fold Curve}

Because fold curves (great arcs) are applicable to only reflection transforms, as shown in Fig.~\ref{Fig:RotSphTriangle}, they must be replaced with new fold curves representing folds in 3D space.
A fold to the outside of $\mathbb{S}^2$ is referred to as a ``3D fold" in this paper.
The fold contains a reflection as folding on $\mathbb{S}^2$ always involves a reflection operation.
In addition, the new fold curves must include great circles as special cases.

Small arcs (equidistant curves) are taken as fold curves instead of great arcs.
The new fold curves must be the intersection of two spherical surfaces because the folded face does not change its shape.
Furthermore, great arcs are involved in the members of small circles.
Figure~\ref{Fig:NewFoldCurve} illustrates the overlap of two equal spherical surfaces, denoted as $\mathrm{O}$ and $\mathrm{O}_1$.
Based on Fig.~\ref{Fig:NewFoldCurve}, the fold of spherical triangle $\mathrm{PQR}$ on the spherical surface $\mathrm{O}$ with respect to the great arc $\mathrm{PQ}$ is considered in the following.
The point $\mathrm{O}$ is the center of the spherical surface containing the spherical points $\mathrm{P}$, $\mathrm{Q}$, and $\mathrm{R}$, and $\mathrm{O}_1$ is the center of the spherical surface containing the folded triangle $\mathrm{PQR'}$.
The viewpoint of Fig.~\ref{Fig:NewFoldCurve}A is taken so that the points $\mathrm{P}$ and $\mathrm{Q}$ overlap.
The point denoted as $\mathrm{P_m}$ in the figure is the midpoint of $\mathrm{P}$ and $\mathrm{Q}$ on the spherical surface $\mathrm{O}$.
Figure~\ref{Fig:NewFoldCurve}B is the view from the direction of $\mathbf{p}_\mathrm{m}$.

\begin{figure}[ht]
\centering
\includegraphics[width=0.95\textwidth]{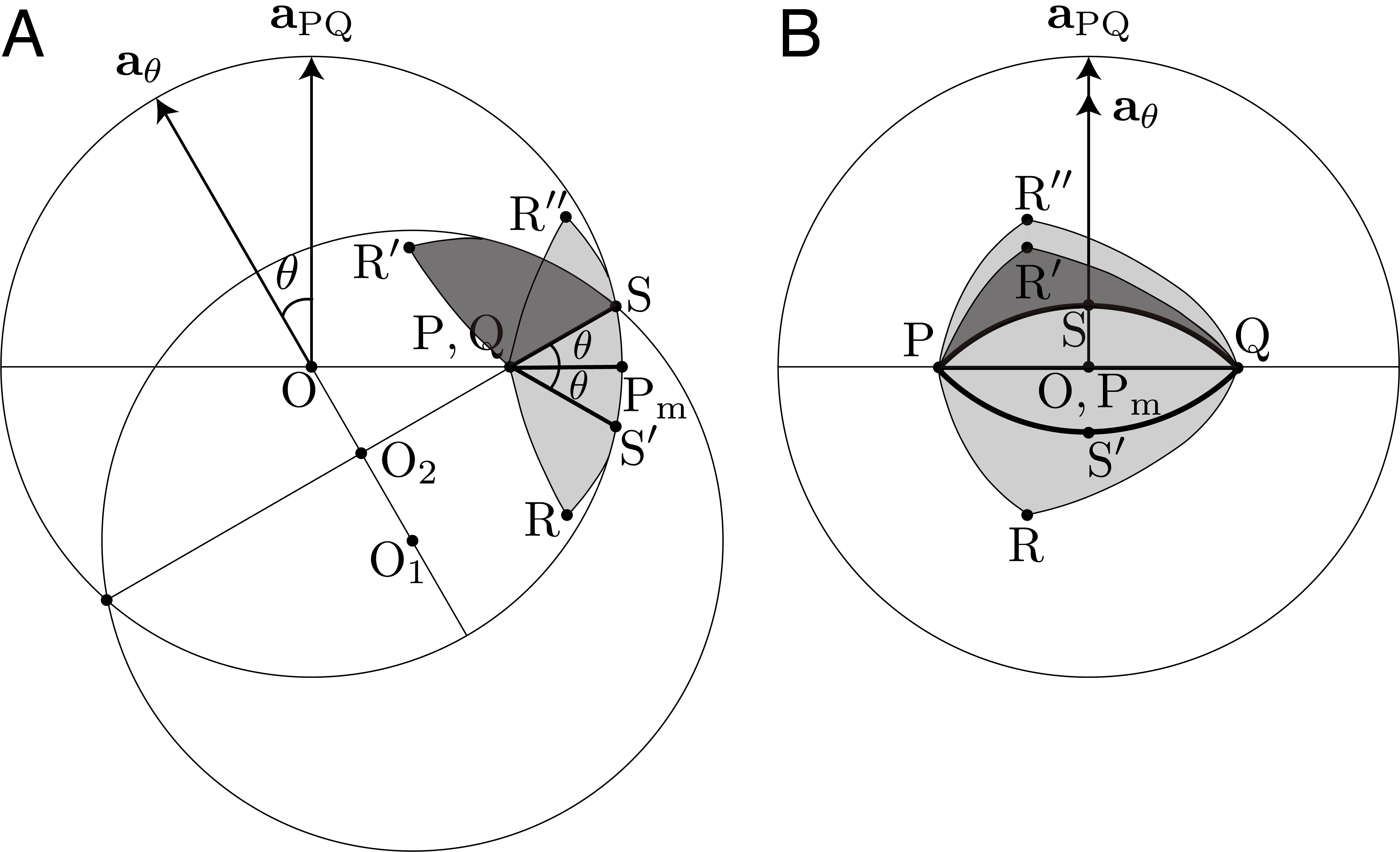}
\caption{Folding of the triangle $\mathrm{PQR}$ across the great arc $\mathrm{PQ}$ in 3D space. A: The viewpoint is set to overlap the vertices $\mathrm{P}$ and $\mathrm{Q}$. B: The viewpoint is set to overlap the points $\mathrm{O_1}$ and $\mathrm{P_m}$. }\label{Fig:NewFoldCurve}
\end{figure}

A new fold curve for the 3D folds is determined by the position vectors $\mathbf{p}$ and $\mathbf{q}$ and the value of an angle $\theta$, as shown in Fig.~\ref{Fig:NewFoldCurve}.
Here, $\theta$ is the angle between the pole of the original fold curve $\mathrm{PQ}$, $\mathbf{a}_\mathrm{PQ}=(\mathbf{p}\times \mathbf{q})/|\mathbf{p}\times \mathbf{q}|$, and the pole of the new fold curve (small arc) $\mathbf{a}_\theta$.
Therefore, the pole for the new fold curve can be expressed as
\[
\mathbf{a}_\theta=\mathrm{rot}(\mathbf{a}_\mathrm{PQ},\mathbf{a}_\mathrm{PQ}\times \mathbf{p}_\mathrm{m},\theta).
\]
The plane perpendicular to $\mathbf{a}_\theta$ and containing $\mathrm{P}$ and $\mathrm{Q}$ is automatically determined, and the intersection of the plane and the spherical surface $\mathrm{O}$ is the small circle containing the new fold curve.
The equation for the small circle $(\mathbf{a}_\theta,\mathbf{p})$ is written in terms of a parameter $t$ as
\[
 \mathrm{rot}(\mathbf{p}, \mathbf{a}_\theta, t) \,\,\,\, (0 \le t < 2\pi).
\]
Because the new fold curve starts from $\mathrm{P}$ and ends at $\mathrm{Q}$, the equation for the fold curve is
\begin{equation}
 \mathrm{rot}(\mathbf{p}, \mathbf{a}_\theta, t) \,\,\,\, (0 \le t \le \psi),
  \label{Eq:newFoldCurve}
\end{equation}
where angle $\psi$ is calculated from the equation
\[
\psi=\cos^{-1} \left(\frac{(\mathbf{p}-(\mathbf{a}_\theta\cdot\mathbf{p})\mathbf{a}_\theta)\cdot(\mathbf{q}-(\mathbf{a}_\theta\cdot\mathbf{q})\mathbf{a}_\theta)}{1-(\mathbf{a}_\theta\cdot\mathbf{p})^2}\right).
\]
Here, $\mathbf{a}_\theta\cdot\mathbf{p} = \mathbf{a}_\theta\cdot\mathbf{q}$ because $\mathrm{P}$ and $\mathrm{Q}$ lie on the same small circle $(\mathbf{a}_\theta, \mathbf{p})$; thus the projected vectors $\mathbf{p}-(\mathbf{a}_\theta\cdot\mathbf{p})\mathbf{a}_\theta$ and $\mathbf{q}-(\mathbf{a}_\theta\cdot\mathbf{q})\mathbf{a}_\theta$ have equal norms $\sqrt{1-(\mathbf{a}_\theta\cdot\mathbf{p})^2}$, and dividing by this squared norm normalizes the expression so that the argument of $\cos^{-1}$ lies in $[-1,1]$.
The vectors $(\mathbf{a}_\theta\cdot\mathbf{p})\mathbf{a}_\theta$ and $(\mathbf{a}_\theta\cdot\mathbf{q})\mathbf{a}_\theta$ are identical and represent the position vector of the center of the small circle $\mathrm{O}_2$.
The position vector of $\mathrm{O}_1$ is obtained automatically as $2(\mathbf{a}_\theta\cdot\mathbf{p})\mathbf{a}_\theta$.
A new point $\mathrm{S}$ is set for the following operation. This point is the most distant point on the new fold curve from both $\mathrm{P}$ and $\mathrm{Q}$.

To fold the triangle $\mathrm{PQR}$ based on the new fold curve Eq.~\ref{Eq:newFoldCurve}, the position vector of the point $\mathrm{R}'$ must be calculated.
It can be obtained in two ways: one is the use of reflection and rotation, and the other is the use of reflection only.
For the former method, the point $\mathrm{R}$ is transferred to $\mathrm{R}''$ by the reflection across $\mathbf{a}_\mathrm{PQ}$. The transfer is carried out for not the triangle $\mathrm{PQR}$ but the spherical area $\mathrm{PRSQ}$.
The reflection is necessary to ensure that the whole area is not strained.
In other words, this operation is required to keep the original shape during the folding process.
Then, the area is rotated by $2\theta$ with respect to the axis $\mathrm{PQ}$ to overlap the reflected fold curve $\mathrm{PS'Q}$ with the fold curve $\mathrm{PSQ}$.
Therefore, the transformation from $\mathrm{R}$ to $\mathrm{R}'$ is described as
\begin{equation}
    \mathbf{r}' = \mathrm{rot}\left(\mathrm{ref}(\mathbf{r}, \mathbf{a}_\mathrm{PQ})-\mathbf{p},\frac{\mathbf{q}-\mathbf{p}}{\left|\mathbf{q}-\mathbf{p}\right|}, 2\theta\right)+\mathbf{p}.
    \label{Eq:3DFold1}
\end{equation}
For the latter method, the area $\mathrm{PRSQ}$ is transferred by reflection across the plane determined by $\mathrm{P}$, $\mathrm{Q}$, and $\mathrm{O}_2$.
Therefore, the equation for $\mathrm{R}'$ is
\begin{equation}
    \mathbf{r}' = \mathrm{ref}(\mathbf{r}-(\mathbf{a}_\theta\cdot\mathbf{p})\mathbf{a}_\theta, \mathbf{a}_\theta)+(\mathbf{a}_\theta\cdot\mathbf{p})\mathbf{a}_\theta.
    \label{Eq:3DFold2}
\end{equation}
The construction above assumes that $\mathrm{P}$ and $\mathrm{Q}$ are not antipodal. In the antipodal case, $\mathbf{a}_\mathrm{PQ}$ and $\mathbf{p}_\mathrm{m}$ are indefinite, so Eqs.~\ref{Eq:3DFold1} and~\ref{Eq:3DFold2} are not applicable; geodesics connecting antipodal points are not unique, and any such geodesic remains in $\mathbb{S}^2$ for arbitrary $\theta$. We therefore restrict attention to non-antipodal $\mathrm{P}, \mathrm{Q}$ in what follows.

The two transformations Eqs.~\ref{Eq:3DFold1} and~\ref{Eq:3DFold2} in fact define the same isometry, as the following theorem summarizes.

\begin{theorem}[3D Fold via Equidistant Curve]\label{thm:3DFold}
Let $\mathrm{P}, \mathrm{Q} \in \mathbb{S}^2$ be non-antipodal points, and let $\theta \in [-\theta_\mathrm{max}, \theta_\mathrm{max}]$, where $\theta_\mathrm{max}$ is determined by the inner angles of the spherical polygon being folded.
The small arc with pole $\mathbf{a}_\theta$ defined by Eq.~\ref{Eq:newFoldCurve}, joining $\mathrm{P}$ and $\mathrm{Q}$, realizes an isometric 3D fold of the spherical polygon onto the congruent polygon on the reflected sphere $\mathrm{O}_1$.
The induced transformation of a point $\mathbf{r}$ in the folded region admits the two equivalent expressions Eq.~\ref{Eq:3DFold1} (reflection followed by rotation) and Eq.~\ref{Eq:3DFold2} (a single planar reflection).
In particular, the special case $\theta = 0$ recovers the flat fold across the great arc $\mathrm{PQ}$ on $\mathbb{S}^2$ described in Section~\ref{subsec:DefAndAxioms}.
\end{theorem}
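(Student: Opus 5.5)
The plan is to set up an adapted orthonormal frame and reduce everything to a $2\times2$ computation in the plane perpendicular to the chord $\mathrm{PQ}$. Put $\mathbf{e}_1=(\mathbf{q}-\mathbf{p})/|\mathbf{q}-\mathbf{p}|$, $\mathbf{n}=\mathbf{a}_\mathrm{PQ}$, and $\mathbf{m}=\mathbf{p}_\mathrm{m}$. Since $\mathrm{P},\mathrm{Q}$ are non-antipodal and unit, $\mathbf{p}+\mathbf{q}$ is orthogonal to $\mathbf{q}-\mathbf{p}$, so $(\mathbf{e}_1,\mathbf{m},\mathbf{n})$ is orthonormal and, up to sign, $\mathbf{a}_\mathrm{PQ}\times\mathbf{p}_\mathrm{m}=\pm\mathbf{e}_1$. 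Then $\mathbf{a}_\theta=\mathbf{n}\cos\theta\pm\mathbf{m}\sin\theta$ lies in $\mathrm{span}(\mathbf{m},\mathbf{n})$ and is orthogonal to $\mathbf{e}_1$. Writing $\mathbf{p}=c\,\mathbf{m}-s\,\mathbf{e}_1$ and $\mathbf{q}=c\,\mathbf{m}+s\,\mathbf{e}_1$, we get $\mathbf{a}_\theta\cdot\mathbf{p}=\mathbf{a}_\theta\cdot\mathbf{q}=\pm c\sin\theta=:h$. Hence both points lie on the plane $\{\mathbf{x}:\mathbf{a}_\theta\cdot\mathbf{x}=h\}$, and the small circle $l(\mathbf{a}_\theta,\mathbf{p})$ of Eq.~\ref{Eq:newFoldCurve} passes through $\mathrm{Q}$. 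This justifies the formula for $\psi$.

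Next I would show that Eq.~\ref{Eq:3DFold2} is the reflection $\sigma$ across the affine plane $\Pi_\theta=\{\mathbf{a}_\theta\cdot\mathbf{x}=h\}$. Indeed, $\sigma(\mathbf{x})=\mathbf{x}-2(\mathbf{a}_\theta\cdot\mathbf{x}-h)\mathbf{a}_\theta$, which expands to exactly Eq.~\ref{Eq:3DFold2}. It is therefore a Euclidean isometry that fixes the fold circle pointwise and sends $\mathrm{O}$ to $2h\,\mathbf{a}_\theta=\mathrm{O}_1$. So it maps $\mathbb{S}^2$ onto the unit sphere centred at $\mathrm{O}_1$, sends great/small arcs to great/small arcs, and carries the region $\mathrm{PRSQ}$ to a congruent region glued along the fold curve. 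This gives the isometric-fold claim. The range restriction $|\theta|\le\theta_\mathrm{max}$ only ensures that the folded region stays on one side of $\Pi_\theta$; I would treat it as a hypothesis rather than prove anything about it.

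For the equivalence with Eq.~\ref{Eq:3DFold1}, both maps are affine isometries fixing $\mathbf{p}$, so I would compare their linear parts on the frame. The reflection $\mathrm{ref}(\cdot,\mathbf{n})$ fixes $\mathbf{e}_1,\mathbf{m}$ and negates $\mathbf{n}$. The rotation about the line $\mathbf{p}+\mathbb{R}\mathbf{e}_1$ by $2\theta$ fixes $\mathbf{e}_1$ and rotates $\mathrm{span}(\mathbf{m},\mathbf{n})$ by $2\theta$. Their composition is a reflection in $\mathrm{span}(\mathbf{m},\mathbf{n})$, namely the reflection across the line at angle $\theta$ from $\mathbf{m}$, i.e.\ across $\mathbf{a}_\theta^\perp$. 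This is the standard fact that a rotation by $2\theta$ composed with a reflection is a reflection across the axis rotated by $\theta$. The composition also fixes $\mathbf{e}_1$. So the linear part is $\mathbf{x}\mapsto\mathbf{x}-2(\mathbf{a}_\theta\cdot\mathbf{x})\mathbf{a}_\theta$, the same as for $\sigma$. The affine parts must also agree. Eq.~\ref{Eq:3DFold1} sends $\mathbf{p}$ to $\mathrm{rot}(\mathrm{ref}(\mathbf{p},\mathbf{n})-\mathbf{p},\dots)+\mathbf{p}=\mathbf{p}$, since $\mathbf{n}\cdot\mathbf{p}=0$. Also $\sigma(\mathbf{p})=\mathbf{p}$ because $\mathbf{p}\in\Pi_\theta$. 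Two affine maps with equal linear part that agree at one point coincide, which proves the equivalence.

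The main obstacle is the sign bookkeeping. The rotation direction in Eq.~\ref{Eq:3DFold1} depends on the orientation of $\mathbf{e}_1=(\mathbf{q}-\mathbf{p})/|\mathbf{q}-\mathbf{p}|$, and the tilt of $\mathbf{a}_\theta$ depends on the orientation of $\mathbf{a}_\mathrm{PQ}\times\mathbf{p}_\mathrm{m}$. I would first fix the sign of $\mathbf{a}_\mathrm{PQ}\times\mathbf{p}_\mathrm{m}$ relative to $\mathbf{e}_1$, using $\mathbf{n}\times\mathbf{m}\propto(\mathbf{p}\times\mathbf{q})\times(\mathbf{p}+\mathbf{q})$, and then check that the rotation by $+2\theta$ matches the tilt by $+\theta$. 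If it does not, the statement holds with $\theta\mapsto-\theta$, which I would note as a convention adjustment. Finally, setting $\theta=0$ gives $\mathbf{a}_0=\mathbf{n}$ and $h=0$, so Eq.~\ref{Eq:3DFold2} reduces to $\mathrm{ref}(\mathbf{r},\mathbf{a}_\mathrm{PQ})$ of Eq.~\ref{Eq:Reflection} with $\mathrm{O}_1=\mathrm{O}$, which recovers the flat fold.
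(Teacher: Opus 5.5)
Your proposal is correct and follows the same route as the paper. Eq.~\ref{Eq:3DFold2} is the reflection across the affine plane $\{\mathbf{a}_\theta\cdot\mathbf{x}=\mathbf{a}_\theta\cdot\mathbf{p}\}$, and Eq.~\ref{Eq:3DFold1} is the reflection with normal $\mathbf{a}_\mathrm{PQ}$ followed by a rotation by $2\theta$ about the chord $\mathrm{PQ}$, which equals that tilted reflection; you simply prove the ``classical result'' the paper cites, by comparing linear parts in the frame $(\mathbf{e}_1,\mathbf{p}_\mathrm{m},\mathbf{a}_\mathrm{PQ})$ and using the common fixed point $\mathbf{p}$.

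The sign check you defer works out with no $\theta\mapsto-\theta$ adjustment. From $(\mathbf{p}\times\mathbf{q})\times(\mathbf{p}+\mathbf{q})=(1+\mathbf{p}\cdot\mathbf{q})(\mathbf{q}-\mathbf{p})$ you get $\mathbf{a}_\mathrm{PQ}\times\mathbf{p}_\mathrm{m}=+\mathbf{e}_1$, hence $\mathbf{a}_\theta=\mathbf{a}_\mathrm{PQ}\cos\theta+\mathbf{p}_\mathrm{m}\sin\theta$. The rotation by $+2\theta$ about $\mathbf{e}_1$ turns $\mathbf{p}_\mathrm{m}$ toward $-\mathbf{a}_\mathrm{PQ}$, so the composite fixes $\mathbf{e}_1$ and $\mathbf{p}_\mathrm{m}\cos\theta-\mathbf{a}_\mathrm{PQ}\sin\theta$ and negates $\mathbf{a}_\theta$, exactly as required.
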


\begin{proof}
The construction of $\mathbf{a}_\theta$ together with Eq.~\ref{Eq:newFoldCurve} specifies the small arc through $\mathrm{P}$ and $\mathrm{Q}$ uniquely once $\theta$ is fixed.
Because $\mathrm{O}$ and $\mathrm{O}_1$ are unit spheres sharing this small circle, the reflection across the plane containing the circle maps $\mathrm{O}$ to $\mathrm{O}_1$ isometrically, so the polygon is transported onto $\mathrm{O}_1$ without distortion.

To see the equivalence of Eqs.~\ref{Eq:3DFold1} and~\ref{Eq:3DFold2}, note that the composition in Eq.~\ref{Eq:3DFold1} consists of (i) a reflection across the plane $\pi_1$ containing the great circle $\mathbf{a}_\mathrm{PQ}$ (with normal $\mathbf{a}_\mathrm{PQ}$), followed by (ii) a rotation by $2\theta$ about the chord $\mathrm{PQ}$, which lies in $\pi_1$.
By a classical result in geometry, the composition of a reflection across a plane $\pi_1$ and a rotation by $2\theta$ about a line $l$ contained in $\pi_1$ equals the reflection across the plane $\pi_2$ that contains $l$ and makes angle $\theta$ with $\pi_1$.
In our case, $l = \mathrm{PQ}$ and $\pi_2$ is the plane perpendicular to $\mathbf{a}_\theta$ (which contains $\mathrm{P}$ and $\mathrm{Q}$ since both lie on the small circle with pole $\mathbf{a}_\theta$), and the angle between $\pi_1$ and $\pi_2$ equals $\theta$ by the definition of $\mathbf{a}_\theta$.
Equation~\ref{Eq:3DFold2} is precisely this reflection across $\pi_2$, as can be verified by expanding $\mathrm{ref}(\mathbf{r}-(\mathbf{a}_\theta\cdot\mathbf{p})\mathbf{a}_\theta, \mathbf{a}_\theta)+(\mathbf{a}_\theta\cdot\mathbf{p})\mathbf{a}_\theta = \mathbf{r} - 2(\mathbf{a}_\theta\cdot\mathbf{r} - \mathbf{a}_\theta\cdot\mathbf{p})\mathbf{a}_\theta$, which is the reflection of $\mathbf{r}$ across the affine plane $\{\mathbf{x} : \mathbf{a}_\theta\cdot\mathbf{x} = \mathbf{a}_\theta\cdot\mathbf{p}\}$.
Setting $\theta=0$ gives $\mathbf{a}_\theta=\mathbf{a}_\mathrm{PQ}$, $\pi_2=\pi_1$, and Eq.~\ref{Eq:3DFold2} reduces to the planar reflection Eq.~\ref{Eq:Reflection}.
\end{proof}

In the computer graphics below, Eq.~\ref{Eq:3DFold1} is used because it is more straightforward to implement.

\begin{figure}[ht]
\centering
\includegraphics[width=0.95\textwidth]{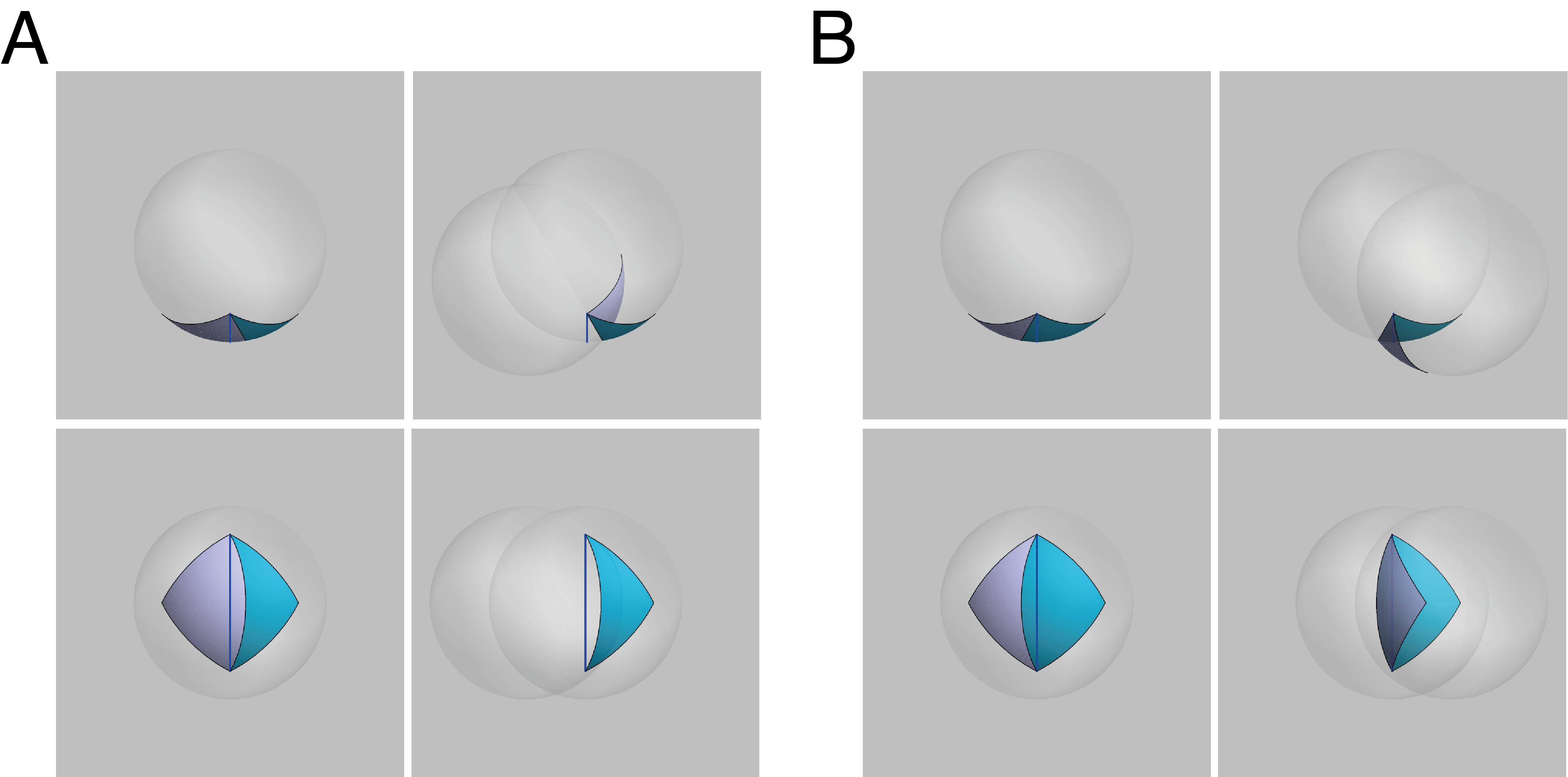}
\caption{Two types of folding. The left spherical area in gray is folded in both cases. A: Folding to the inside of the original sphere. B: Folding to the outside.}\label{Fig:3DFoldExample}
\end{figure}

The geometrical difference in folding directions should be mentioned because it determines the location of the new fold curve.
Equation~\ref{Eq:newFoldCurve} holds for both cases: $-\theta_\mathrm{max}\le\theta<0$ for the outside and $0<\theta\le \theta_\mathrm{max}$ for the inside, respectively. The value of the positive parameter $\theta_\mathrm{max}$ is determined by its foldout diagram.
Figure~\ref{Fig:3DFoldExample} shows an example of each case; Fig.~\ref{Fig:3DFoldExample}A is the fold to the inside of the sphere, and Fig.~\ref{Fig:3DFoldExample}B is that to the outside.
The top and bottom images give top and side views, respectively.
The left and right images for groups A and B represent the images before and after folding.
Thick blue lines represent the original fold curves (great arcs), and the boundaries of different colored areas represent the new fold curves.
The figure includes the sphere containing the folded area to indicate the relationship between the two areas before and after the foldings.

Because the new fold curves are not the same as the original fold curves, the area to be folded is different from the original spherical polygon. Significantly, the fold curve to the inside is drawn in the area outside the original polygon as shown in Figs.~\ref{Fig:NewFoldCurve}B~and~\ref{Fig:3DFoldExample}A.
This feature of folding to the inside sometimes prevents folding designs. It is mentioned again using an example in the next subsection.

The dihedral angle produced by the 3D fold of Theorem~\ref{thm:3DFold} can be expressed in closed form.

\begin{proposition}[Dihedral Angle of the 3D Fold]\label{prop:DihedralAngle}
Under the setting of Theorem~\ref{thm:3DFold}, the angle $\alpha$ between the tangential planes of the spherical surfaces $\mathrm{O}$ and $\mathrm{O}_1$ at the apex $\mathrm{S}$ of the new fold curve (Fig.~\ref{Fig:NewFoldCurve}) satisfies
\begin{equation}
\alpha = 2 \cos^{-1} \left(\sqrt{\frac{1+\mathbf{p}\cdot\mathbf{q}}{2}}\sin\theta\right).
\label{Eq:GeneralRelation}
\end{equation}
In particular, $\alpha$ depends on both the spherical distance between $\mathrm{P}$ and $\mathrm{Q}$ (through $\mathbf{p}\cdot\mathbf{q}$) and the fold parameter $\theta$.
\end{proposition}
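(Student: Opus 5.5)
The plan is to reduce everything to an orthonormal frame adapted to $\mathrm{P}$ and $\mathrm{Q}$. Then $\alpha$ follows from one inner product between the outward normals of the two spheres at $\mathrm{S}$.

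Write $c=\mathbf{p}\cdot\mathbf{q}$. Since $\mathrm{P},\mathrm{Q}$ are not antipodal, $|\mathbf{p}+\mathbf{q}|=\sqrt{2(1+c)}\neq0$. Take the frame $\mathbf{e}_1=\mathbf{p}_\mathrm{m}=(\mathbf{p}+\mathbf{q})/\sqrt{2(1+c)}$ (Eq.~\ref{Eq:MidPoint}), $\mathbf{e}_3=\mathbf{a}_\mathrm{PQ}$, and $\mathbf{e}_2=\mathbf{a}_\mathrm{PQ}\times\mathbf{p}_\mathrm{m}$. The vector $\mathbf{e}_2$ is parallel to the chord $\mathbf{q}-\mathbf{p}$. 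Because the rotation axis $\mathbf{e}_2$ is orthogonal to $\mathbf{a}_\mathrm{PQ}$, Rodrigues' formula (Eq.~\ref{Eq:Rotation}) gives
\[
\mathbf{a}_\theta=\cos\theta\,\mathbf{a}_\mathrm{PQ}+\sin\theta\,(\mathbf{e}_2\times\mathbf{a}_\mathrm{PQ})=\cos\theta\,\mathbf{e}_3+\sin\theta\,\mathbf{e}_1,
\]
with the overall sign fixed by the orientation of the frame. Since $\mathbf{p}\cdot\mathbf{e}_3=0$ and $\mathbf{p}\cdot\mathbf{e}_1=\sqrt{(1+c)/2}$, I obtain
\[
d:=\mathbf{a}_\theta\cdot\mathbf{p}=\mathbf{a}_\theta\cdot\mathbf{q}=\sqrt{\tfrac{1+c}{2}}\,\sin\theta .
\]
This $d$ is the signed distance from $\mathrm{O}$ to the fold plane $\pi_2$. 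By the discussion preceding Theorem~\ref{thm:3DFold}, $\mathbf{o}_1=2d\,\mathbf{a}_\theta$.

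Next I locate $\mathrm{S}$ and compute the normals there. The apex $\mathrm{S}$ is the point of the small circle $(\mathbf{a}_\theta,\mathbf{p})$ farthest from $\mathrm{P}$ and $\mathrm{Q}$. By the symmetry of the configuration under reflection in the plane $\mathbf{e}_2^\perp$, $\mathrm{S}$ lies in $\mathrm{span}(\mathbf{e}_1,\mathbf{e}_3)$. However, the computation only uses that $\mathrm{S}$ is on both spheres, i.e.\ $|\mathbf{s}|=1$ and $\mathbf{a}_\theta\cdot\mathbf{s}=d$. The outward normal of $\mathrm{O}$ at $\mathrm{S}$ is $\mathbf{s}$, and that of $\mathrm{O}_1$ is $\mathbf{s}-\mathbf{o}_1$, which is also a unit vector. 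Then
\[
\mathbf{s}\cdot(\mathbf{s}-\mathbf{o}_1)=1-2d\,(\mathbf{a}_\theta\cdot\mathbf{s})=1-2d^2 .
\]
So the angle $\beta$ between the normals satisfies $\cos\beta=1-2d^2$. In fact the angle between the tangent planes is the same at every point of the fold curve, not only at $\mathrm{S}$.

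The final step, and the one I expect to need the most care, is converting $\beta$ into the dihedral angle $\alpha$ in the sense of Fig.~\ref{Fig:NewFoldCurve}. The relevant angle is the one between the unfolded sheet and the folded flap, measured inside the fold. This is the supplement $\alpha=\pi-\beta$, so $\cos\alpha=2d^2-1=\cos\bigl(2\cos^{-1}d\bigr)$. That yields Eq.~\ref{Eq:GeneralRelation}, and the sanity check $\theta=0\Rightarrow\alpha=\pi$ (flat fold within $\mathbb{S}^2$, Theorem~\ref{thm:3DFold}) confirms the choice of supplement.

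The subtle point is the sign. I must fix the orientation of $\mathbf{e}_2$, i.e.\ the handedness of $\mathrm{rot}$ about $\mathbf{a}_\mathrm{PQ}\times\mathbf{p}_\mathrm{m}$, so that the $\sin\theta$ term enters with the sign shown. Only then does $\cos^{-1}$ of the signed quantity $d$ correctly distinguish the inside folds ($\theta>0$, $\alpha<\pi$) from the outside folds ($\theta<0$, $\alpha>\pi$). I would verify this with the explicit cross product $(\mathbf{e}_3\times\mathbf{e}_1)\times\mathbf{e}_3$ and check it against the convention stated after Theorem~\ref{thm:3DFold}.

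Finally, the dependence on $\mathbf{p}\cdot\mathbf{q}$ and $\theta$ claimed in the proposition is read off directly from the formula for $d$.
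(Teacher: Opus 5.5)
Up to $\cos\alpha=2d^2-1$ you follow the paper's proof step for step. The reduction $\mathbf{a}_\theta=\cos\theta\,\mathbf{a}_\mathrm{PQ}+\sin\theta\,\mathbf{p}_\mathrm{m}$, the value of $d$, the normal $\mathbf{s}-2d\,\mathbf{a}_\theta$ of $\mathrm{O}_1$ and the supplement convention are all the same, and that part is correct. Your remark that the normals make the same angle all along the fold curve is a valid extra.

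The gap is the last step. You set $\alpha=\pi-\beta$ with $\beta\in[0,\pi]$, so $\alpha\in[0,\pi]$ and $\cos(\alpha/2)\ge 0$. Then $\cos\alpha=2d^2-1$ yields only $\alpha=2\cos^{-1}|d|$. That is Eq.~\ref{Eq:GeneralRelation} for $\theta\ge 0$. For $\theta<0$, however, the stated formula gives $\alpha>\pi$, which your own definition excludes.

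Your proposed remedy cannot help, for two reasons:
\begin{itemize}
\item The sign of the $\mathbf{e}_1$-coefficient is not a convention. Your own cross product $(\mathbf{e}_3\times\mathbf{e}_1)\times\mathbf{e}_3=\mathbf{e}_1$ already forces $+\sin\theta$.
\item More to the point, $\cos\alpha$ depends only on $d^2$. No orientation of $\mathbf{e}_2$ can make an unoriented angle between normals register the sign of $\theta$.
\end{itemize}
The paper's proof stops at $\cos(\alpha/2)=|\mathbf{a}_\theta\cdot\mathbf{p}|$ and shares this limitation. Still, your claim that outside folds give $\alpha>\pi$ is asserted, not derived. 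Likewise, the check $\theta=0\Rightarrow\alpha=\pi$ only confirms agreement with the target formula.

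What is missing is an \emph{oriented} definition of $\alpha$. First, your geometric reading is off. The wedge between the unfolded part of the sheet and the folded flap has opening $\beta$, not $\pi-\beta$: it is $0$ at $\theta=0$, where the flap lies on the sheet, and it tends to $2|\theta|$ as $\mathbf{p}\cdot\mathbf{q}\to 1$. The quantity $\pi-\beta$ is the angle between the folded flap and the flap's \emph{original} position.

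So work in the normal plane $\mathrm{span}(\mathbf{e}_1,\mathbf{e}_3)$ of the fold curve at $\mathrm{S}$.
\begin{itemize}
\item Write $\mathbf{s}=\cos\varphi\,\mathbf{e}_1+\sin\varphi\,\mathbf{e}_3$; the apex condition is then $\sin(\varphi+\theta)=d$.
\item Let $\mathbf{w}=-\sin\varphi\,\mathbf{e}_1+\cos\varphi\,\mathbf{e}_3$ be the flap's original tangent direction. It points toward $\mathrm{R}$, taken with $\mathbf{a}_\mathrm{PQ}\cdot\mathbf{r}>0$, which is the convention under which $\theta>0$ is an inside fold.
\item Define $\alpha$ as the angle from $\mathbf{w}$ to its image $\mathbf{w}-2(\mathbf{a}_\theta\cdot\mathbf{w})\mathbf{a}_\theta$, turning first toward the centre of $\mathrm{O}$.
\end{itemize}
On this plane the fold acts as reflection in the line at angle $-\theta$ from $\mathbf{e}_1$. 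Hence $\mathbf{w}$, at angle $\varphi+\pi/2$, is sent to angle $-2\theta-\varphi-\pi/2$. The angle count then gives $\alpha=\pi-2(\varphi+\theta)=2\cos^{-1}d$ for either sign of $\theta$, with $\alpha>\pi$ exactly for outside folds. That is the argument your sign discussion was reaching for.
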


\begin{proof}
Since $\mathbf{a}_\mathrm{PQ}\cdot\mathbf{p}_\mathrm{m}=0$, the rotation formula for $\mathbf{a}_\theta$ simplifies to
\[
\mathbf{a}_\theta = \mathbf{a}_\mathrm{PQ}\cos\theta + \mathbf{p}_\mathrm{m}\sin\theta.
\]
Because $\mathbf{a}_\mathrm{PQ}\cdot\mathbf{p}=0$ (P lies on $\mathbf{a}_\mathrm{PQ}$) and $\mathbf{p}_\mathrm{m}\cdot\mathbf{p} = \sqrt{(1+\mathbf{p}\cdot\mathbf{q})/2}$, it follows that
\[
\mathbf{a}_\theta\cdot\mathbf{p} = \sqrt{\frac{1+\mathbf{p}\cdot\mathbf{q}}{2}}\sin\theta.
\]
The outward unit normal to $\mathrm{O}_1$ at $\mathrm{S}$ is $\mathbf{n}_1 = \mathbf{s} - 2(\mathbf{a}_\theta\cdot\mathbf{s})\mathbf{a}_\theta$ (with $|\mathbf{n}_1|=1$ verified from $\mathbf{O}_1 = 2(\mathbf{a}_\theta\cdot\mathbf{p})\mathbf{a}_\theta$), and $\mathbf{a}_\theta\cdot\mathbf{s} = \mathbf{a}_\theta\cdot\mathbf{p}$ since $\mathrm{S}$ lies on the same small circle.
The angle $\alpha$ is defined as the supplement of the angle between the outward normals $\mathbf{s}$ and $\mathbf{n}_1$:
\[
\cos\alpha = -\mathbf{s}\cdot\mathbf{n}_1 = 2(\mathbf{a}_\theta\cdot\mathbf{p})^2 - 1.
\]
Applying $\cos\alpha = 2\cos^2(\alpha/2)-1$ gives $\cos(\alpha/2) = |\mathbf{a}_\theta\cdot\mathbf{p}|$, which yields Eq.~\ref{Eq:GeneralRelation}.
\end{proof}

Another constraint for 3D folding is that the new fold curve must be drawn within the face to be drawn.
The new fold curve is drawn on the spherical polygon to be folded (fold to the outside) or the one opposite to the polygon (fold to the inside) so the curve cannot exceed its boundary.
This means that the fold angle is restricted by the geometry of the spherical polygon on which the new fold curve is drawn. For example, the angle $\theta$ in Fig.~\ref{Fig:NewFoldCurve} is expressed in terms of $\theta_\mathrm{max}=\mathrm{min}(\angle\mathrm{P},\angle\mathrm{Q})$ as $-\theta_\mathrm{max}\le\theta\le\theta_\mathrm{max}$, where $\angle\mathrm{P}$ and $\angle\mathrm{Q}$ represent the amounts of the inner angle of the polygon.

\subsection{Spherical Origami Birds in the 3D Space} \label{sec:wings}

The expansion of the wings of a spherical origami bird is subject to two constraints that do not arise in the Euclidean case.
First, folding the inside wing to the inside of the sphere is impossible using the fold curve in Fig.~\ref{Fig:FoldoutAndResultsInS2} directly, because the required new fold curves would be drawn on the neck and tail parts, which obstruct the fold.
Second, folding both wings to the outside avoids this obstruction but is unacceptable as a ``bird'' form, since both wings expand on the same side.
Therefore, the foldout diagram must be modified to satisfy both constraints simultaneously.
Recall from Section~\ref{sec:3D} that the new fold curve for an inside fold must be drawn on the face opposite the folded polygon.
In the original crane design, this opposite face coincides with the neck and tail regions, making an inside fold of the inner wing impossible without modification.
To resolve this, the inner-wing face in the foldout diagram is extended along the fold edge to include a small additional region (visible as the blue-bounded triangles in the top-right panels of Fig.~\ref{Fig:3DFoldout}).
The new fold curve is drawn within this extension, which lies entirely within the wing and does not overlap the neck or tail.
The outer wing retains its original face area, since an outside fold ($\theta < 0$) requires no such extension.

\begin{figure}[ht]
\centering
\includegraphics[width=0.95\textwidth]{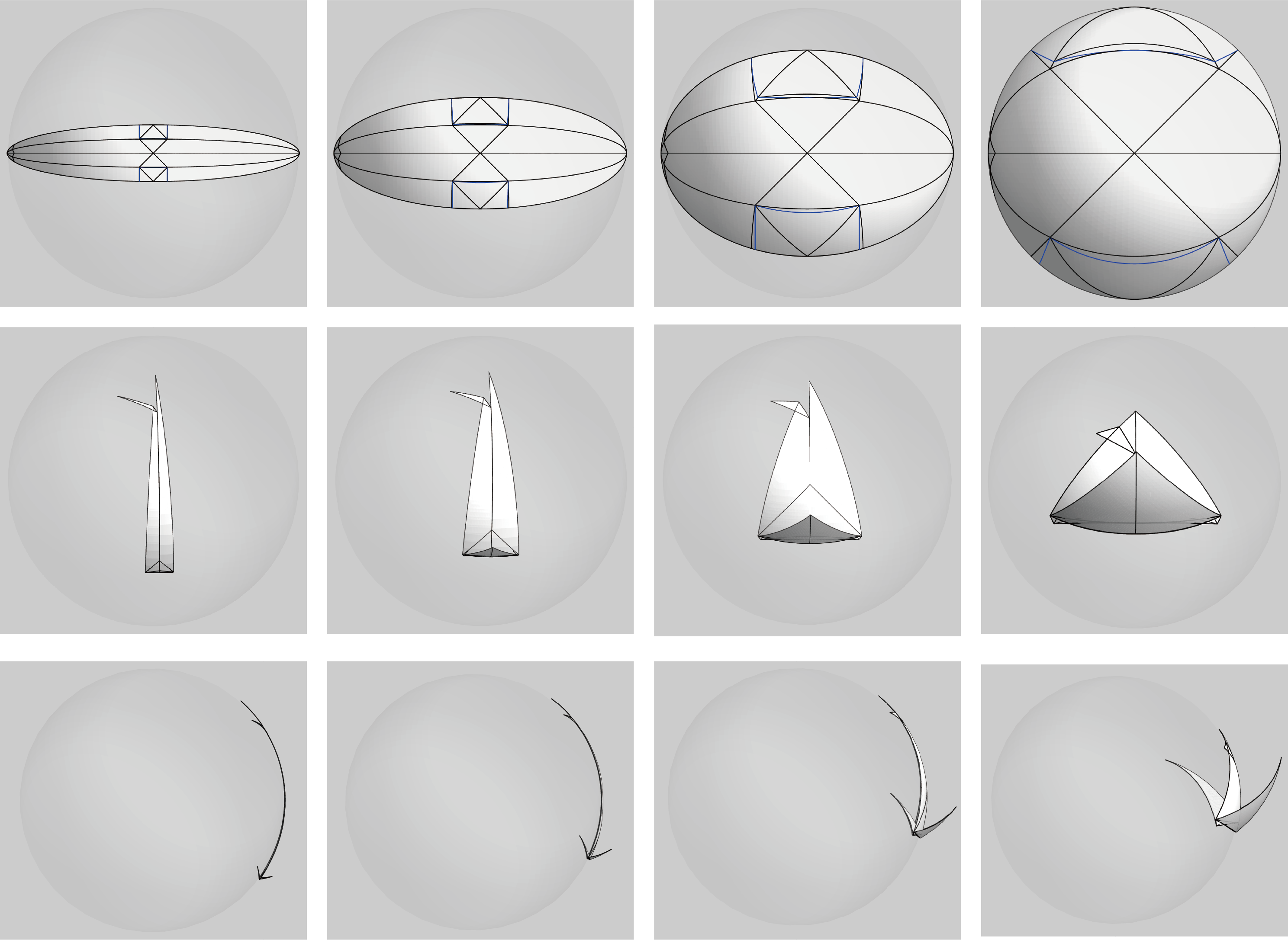}
\caption{New foldout diagrams (top row) and their folds (middle and bottom rows). The blue curves in the top-right panels show the new fold curves: on the inner wing, the fold curve lies within a small extension of the wing face; on the outer wing, it lies on the original wing face.}\label{Fig:3DFoldout}
\end{figure}

Figure~\ref{Fig:3DFoldout} shows alternative foldout diagrams (top row), and the side and front views of the resulting folds (middle and bottom rows, respectively).
The diagrams are designed so that each wing is extended oppositely.
The inner wing uses an inside fold ($\theta > 0$) with the new fold curve drawn in the extended region, while the outer wing uses an outside fold ($\theta < 0$) with the new fold curve drawn on the original wing face.
Both wings use the same fold-angle magnitude $|\theta|$ but opposite signs, so that Eq.~\ref{Eq:GeneralRelation} yields identical wing lengths regardless of folding direction.
The new fold curve is defined to contact the original fold curve, ensuring that the distances between the wing tip and wing bottom are equal for both wings.
The blue curves represent the new fold curves added to the original foldout diagrams for the 3D foldings.
The folds were finished successively in all the cases.

The results indicate that the equations for the new fold curves (Eq.~\ref{Eq:newFoldCurve}) and transformation of points (Eqs.~\ref{Eq:3DFold1} and~\ref{Eq:3DFold2}) work correctly.
The images of the folds demonstrate that the angles between two wings depend on the shape of the spherical sheet. The length of the original fold curve results in the difference as $\mathbf{p}\cdot\mathbf{q}$ in Eq.~\ref{Eq:GeneralRelation}.

\section{Discussion}\label{sec:Discussion}

The results above place spherical origami on the same axiomatic and flat-foldability footing as Euclidean origami, while introducing a one-parameter extension to 3D folding via equidistant curves.
Several questions about scope and applicability remain; we discuss them in turn.

It should be noted that Theorem~\ref{thm:SphericalKJ} concerns flat folds (those remaining on $\mathbb{S}^2$); whether an analogous condition governs the 3D folds of Theorem~\ref{thm:3DFold} is an open question, since the fold angles there depend on the global geometry of the face rather than only the vertex angles.
Furthermore, while Theorem~\ref{thm:SphericalKJ} is rigorously established, the practical use of the framework has been demonstrated only on highly symmetric patterns (origami birds); general multi-vertex crease patterns may exhibit additional global constraints beyond the local vertex condition that are not addressed here.

To our knowledge, this paper is the first to present the equations for 3D folds of spherical sheets.
The equations for fold curves and transformations of spherical points provide the necessary computational basis for designing folds, obtaining foldout diagrams, and producing computer graphics of spherical origami.
Furthermore, the results suggest that equidistant curves (small circles on $\mathbb{S}^2$) play a role in spherical origami analogous to that of hypercycles in the hyperbolic setting, and their incorporation may be necessary for a complete treatment of non-Euclidean origami---a direction not yet explored in previous reports such as Alperin et al.~\cite{hyperbolicOrigami} for hyperbolic origami.

Although the present study addresses only one practical case, spherical birds, the framework is in principle applicable to any crease pattern whose sheet satisfies Kawasaki's incircle condition for origami birds on $\mathbb{S}^2$.
For example, other traditional designs such as a spherical helmet (\textit{kabuto}) or boat could be treated by the same procedure: draw the foldout diagram using the seven axioms, verify the angle condition of Theorem~\ref{thm:SphericalKJ} at each vertex, and compute the 3D fold curves using Eq.~\ref{Eq:newFoldCurve}.
The main obstacle to a general treatment is that the fold angles for 3D folds must be determined globally at the diagram stage, a constraint that becomes increasingly complex as the number of vertices grows.
Further practical applications are therefore subject to specific constraints that do not arise in Euclidean origami: most notably, the angles for the 3D folds must be determined at the time of drawing the foldout diagrams.
(1) A fold to the inside of the sphere is sometimes impossible due to the nature of the fold curves outside the face, as demonstrated in the present study.
(2) The fold angle is restricted by the shape of the face to be drawn.

The approach demonstrated in Section~\ref{sec:wings}---inner wing folded inward via a small extension of the wing face, outer wing folded outward---is one of several possible designs. An alternative is to fold the neck and body to the outside of the sphere while keeping the inside wing fixed; this requires additional fold curves and differs from the usual origami-bird folding procedure.

\section{Concluding Remarks}

This paper has established a geometrical framework for spherical origami in two settings: on $\mathbb{S}^2$ and in 3D space.
Theorem~\ref{thm:SphericalHJ} shows that all seven Huzita--Justin axioms admit explicit constructions on $\mathbb{S}^2$, and Theorem~\ref{thm:SphericalKJ}, together with Corollary~\ref{cor:Ftiling}, establishes the Kawasaki--Justin flat-foldability condition and its equivalence with Robertson's $f$-tiling vertex condition.
For 3D folding, Theorem~\ref{thm:3DFold} introduces equidistant curves as fold curves and yields a one-parameter family of isometric folds that recover great-arc folds in the flat limit, with the dihedral angle given in closed form by Proposition~\ref{prop:DihedralAngle}.
Spherical origami birds are constructed in 3D space as a practical demonstration.
These results provide a foundation for further investigation of non-Euclidean origami, including the hyperbolic case.

\section*{Conflict of interest}
The author declares that he has no conflicts of interest concerning this article.

\section*{Data availability}
Data sharing not applicable to this article as no datasets were generated or analysed during the current study.

\section*{Acknowledgements}
The author thanks Prof.\ K.\ Kaino of the National Institute of Technology, Sendai College for his advice at the early stage of this study, and Drs.\ S.\ Chaidee and N.\ Ploymaklam of Chiang Mai University for helpful discussions. This research was partially supported by an Inoue grant from Toyo University.

\appendix

\section{Definitions and Axioms for Euclidean Origami} \label{Appendix:PlanarOrigami}
The following is quoted from Alperin and Lang~\cite{alperinAxioms}.

\subsection{Definitions for Euclidean Origami} \label{subsec:DefinitionsPlanar}

\begin{itemize}
    \item Definition 1(Point): A point ($x$, $y$) is the ordered pair where $x$ and $y$ are the Cartesian coordinates of the point.
    \item Definition 2 (Line): A line ($X$, $Y$) is the set of points ($x$,$y$) that satisfy the equation $Xx+Yy+1=0$.
    \item Definition 3 (Folded Point): The \textit{folded image} $F_{L_F}(P)$ of a point $P=(x,y)$ in a fold line $L_F=(X_F, Y_F)$ is the reflection of the point in the fold line.
    \item Definition 4 (Folded Line): The \textit{folded image} $F_{L_F}(L)$ of a line $L=(X, Y)$ in a fold line $L_F=(X_F, Y_F)$ is the reflection of the line in the fold line.
    \item Definition 5 (Point-Point Alignment): Given two points $P_1=(x_1, y_1)$ and a point $P_2=(x_2, y_2)$, the alignment $P_1\leftrightarrow P_2$ is satisfied if and only if $x_1=x_2$ and $y_1=y_2$.
    \item Definition 6 (Line-Line Alignment): Given two lines $L_1=(X_1, Y_1)$ and another line $L_2=(X_2, Y_2)$, the alignment $L_1\leftrightarrow L_2$ is satisfied if and only if $X_1=X_2$ and $Y_1=Y_2$.
    \item Definition 7 (Point-Line Alignment): Given a point $P=(x,y)$ and a line $L=(X, Y)$, the alignment $P\leftrightarrow L$ is satisfied if and only if $Xx+Yy+1=0$.
    \item Definition 8 (One-Fold Axiom): A \textit{one-fold axiom} (1FA) is a minimal set of alignments that define a single fold line on a finite region of the Euclidean plane with a finite number of solutions.
\end{itemize}

\subsection{Huzita--Justin Axioms} \label{subsec:HuzitaJustin}
\begin{itemize}
    \item Axiom 1: Given two points $P_1$ and $P_2$, we can fold a line connecting them.
    \item Axiom 2: Given two points $P_1$ and $P_2$, we can fold $P_1$ onto $P_2$.
    \item Axiom 3: Given two lines $L_1$ and $L_2$, we can fold line $L_1$ onto $L_2$.
    \item Axiom 4: Given a point $P_1$ and a line $L_1$, we can make a fold perpendicular to $L_1$ passing through the point $P_1$.
    \item Axiom 5: Given two points $P_1$ and $P_2$ and a line $L_1$, we can make a fold that places $P_1$ onto $L_1$ and passes through the point $P_2$.
    \item Axiom 6: Given two points $P_1$ and $P_2$ and two lines $L_1$ and $L_2$, we can make a fold that places $P_1$ onto line $L_1$ and places $P_2$ onto line $L_2$.
    \item Axiom 7: Given a point $P_1$ and two lines $L_1$ and $L_2$, we can make a fold perpendicular to $L_2$ that places $P_1$ onto line $L_1$.
\end{itemize}

\bibliographystyle{elsarticle-num}
\bibliography{_bibliography.bib}

\end{document}